\documentclass[sigconf]{acmart}

\usepackage{hyperref}
\hypersetup{
    colorlinks=true,
    linkcolor=black,
    urlcolor=black,
    citecolor=black,
    breaklinks=true 
}

\def\equationautorefname~#1\null{Equation~(#1)\null}
\usepackage{breakurl}
\usepackage{color}
\usepackage{bm}
\usepackage{xspace}
\usepackage{algorithmic}
\usepackage{algorithm}
\usepackage{mdwlist}    
\usepackage{paralist} 
\usepackage{enumitem}  

\usepackage{utfsym}

\usepackage{adjustbox}
\usepackage{array}
\usepackage{booktabs}
\usepackage{multirow}

\usepackage{array,graphicx}
\usepackage{booktabs}
\usepackage{pifont}
\usepackage{color}

\usepackage{colortbl}
\definecolor{lightgray}{gray}{0.85}
\usepackage{multirow} 
\usepackage{comment}

\usepackage{balance} 

\usepackage{fancybox} 
\usepackage{amsmath}
\usepackage{amsfonts}


\setlength\textfloatsep{1.5pt}
\setlength\abovecaptionskip{1pt}

\usepackage{url}
\newcommand{\propose}{{\textsc{D-Tracker}}\xspace}
\newcommand{\ModelEstimation}{{\textsc{ModelEstimation}}\xspace}
\newcommand{\ModelUpdate}{{\textsc{ModelUpdate}}\xspace}
\newcommand{\RankUpdate}{{\textsc{RankUpdate}}\xspace}

\newcommand{\Xd}{\hat{\mathcal{X}}_{d}}
\newcommand{\Xs}{\hat{\mathcal{X}}_{s}}
\newcommand{\Xo}{\hat{\mathcal{X}}_{o}}
\newcommand{\X}{\mathcal{X}}
\newcommand{\Xc}{\mathcal{X}^{c}}
\newcommand{\Xf}{\mathcal{X}^{f}}
\newcommand{\Xhat}{\hat{\mathcal{X}}}

\newcommand{\Xdmode}{\mathbf{X}_{d}^{\rm{(mode)}}}
\newcommand{\Xsmode}{\mathbf{X}_{s}^{\rm{(mode)}}}

\newcommand{\Gmode}{\mathbf{G}^{\rm{(mode)}}}

\newcommand{\GmodeT}{{\mathbf{G}^{\rm{(mode)}}}^{T}}
\newcommand{\Umode}{\mathbf{U}^{\rm{(mode)}}}
\newcommand{\Vmode}{\mathbf{V}^{\rm{(mode)}}}


\newcommand{\Wcore}{\mathcal{W}^{\rm{(core)}}}
\newcommand{\Wkey}{\mathbf{W}^{\rm{(key)}}}
\newcommand{\Wloc}{\mathbf{W}^{\rm{(loc)}}}
\newcommand{\Wmode}{\mathbf{W}^{\rm{(mode)}}}
\newcommand{\Wmoden}{\mathbf{W}^{\rm{(\overline{mode})}}}
\newcommand{\numerator}{\mathbf{P}^{\rm{(mode)}}}
\newcommand{\denominator}{\mathbf{Q}^{\rm{(mode)}}}
\newcommand{\unfold}[2]{\rm{unfold}(#1, #2)}

\newcommand{\Stime}{\mathbf{S}^{\rm{(time)}}}
\newcommand{\Skey}{\mathbf{S}^{\rm{(key)}}}
\newcommand{\Sloc}{\mathbf{S}^{\rm{(loc)}}}
\newcommand{\Smode}{\mathbf{S}^{\rm{(mode)}}}
\newcommand{\Si}{\mathbf{S}^{(i)}}

\newcommand{\F}{\mathcal{F}}

\newcommand{\lc}{L_c}
\newcommand{\lf}{L_f}


\newcommand{\argmin}{\mathop{\rm arg~min}\limits}

\newcommand{\bhline}[1]{\noalign{\hrule height #1}}

\newcommand{\costM}[1]{<#1>}
\newcommand{\costC}[2]{<#1|#2>}
\newcommand{\costT}[2]{<#1;#2>}
\newcommand{\cF}{c_F}

\newcommand{\mypara}[1]{\vspace{0em}\noindent\textbf{#1.}}


\newcommand{\device}{\textit{Device}\xspace}
\newcommand{\vod}{\textit{VoD}\xspace}
\newcommand{\pythonlib}{\textit{Pythonlib}\xspace}
\newcommand{\chatapp}{\textit{Chatapp}\xspace}

\newcommand{\programming}{\textit{Language}\xspace}
\newcommand{\covid}{\textit{Covid-19}\xspace}




\newtheorem{problem}{Problem}
\newtheorem{lemma}{\textsc{Lemma}}
\newtheorem{definition}{Definition}

\newcommand{\hide}[1]{}

\AtBeginDocument{%
  \providecommand\BibTeX{{%
    \normalfont B\kern-0.5em{\scshape i\kern-0.25em b}\kern-0.8em\TeX}}}


\copyrightyear{2025}
\acmYear{2025}
\setcopyright{acmlicensed}
\acmConference[KDD '25]{Proceedings of the 31st ACM SIGKDD Conference on Knowledge Discovery and Data Mining V.1}{August 3--7, 2025}{Toronto, ON, Canada.}
\acmBooktitle{Proceedings of the 31st ACM SIGKDD Conference on Knowledge Discovery and Data Mining V.1 (KDD '25), August 3--7, 2025, Toronto, ON, Canada}
\acmDOI{10.1145/3690624.3709192}
\acmISBN{979-8-4007-1245-6/25/08}
%
%

\settopmatter{printacmref=true}

\begin{document}

\title{D-Tracker: Modeling Interest Diffusion \\ in Social Activity Tensor Data Streams}

\author{Shingo Higashiguchi}
\affiliation{
  \institution{SANKEN, Osaka University}
  \city{Osaka}
  \country{Japan}
}
\email{shingo88@sanken.osaka-u.ac.jp}

\author{Yasuko Matsubara}
\affiliation{%
  \institution{SANKEN, Osaka University}
  \city{Osaka}
  \country{Japan}
}
\email{yasuko@sanken.osaka-u.ac.jp}

\author{Koki Kawabata}
\affiliation{%
  \institution{SANKEN, Osaka University}
  \city{Osaka}
  \country{Japan}
}
\email{koki@sanken.osaka-u.ac.jp}

\author{Taichi Murayama}
\affiliation{%
  \institution{Yokohama National University}
  \city{Yokohama}
  \country{Japan}
}
\email{murayama-taichi-bs@ynu.ac.jp}

\author{Yasushi Sakurai}
\affiliation{%
  \institution{SANKEN, Osaka University}
  \city{Osaka}
  \country{Japan}
}
\email{yasushi@sanken.osaka-u.ac.jp}

\begin{abstract}
    \label{sec:abstract}
    Large quantities of social activity data, such as weekly web search volumes and the number of new infections with infectious diseases, reflect peoples' interests and activities.
It is important to discover temporal patterns from such data and to forecast future activities accurately.
However, modeling and forecasting social activity data streams is difficult because they are high-dimensional and composed of multiple time-varying dynamics such as trends, seasonality, and interest diffusion.
In this paper, we propose \propose, a method for continuously capturing time-varying temporal patterns within social activity tensor data streams and forecasting future activities.
Our proposed method has the following properties:
(a) \textit{Interpretable}: it incorporates the partial differential equation into a tensor decomposition framework and captures time-varying temporal patterns such as trends, seasonality, and interest diffusion between locations in an interpretable manner; 
(b) \textit{Automatic}: it has no hyperparameters and continuously models tensor data streams fully automatically; 
(c) \textit{Scalable}: the computation time of \propose is independent of the time series length.
Experiments using web search volume data obtained from GoogleTrends, and COVID-19 infection data obtained from COVID-19 Open Data Repository show that our method can achieve higher forecasting accuracy in less computation time than existing methods while extracting the interest diffusion between locations.
Our source code and datasets are available at \url{https://github.com/Higashiguchi-Shingo/D-Tracker}.
\end{abstract}

\begin{CCSXML}
<ccs2012>
<concept>
<concept_id>10010147.10010257.10010293.10010309.10010311</concept_id>
<concept_desc>Computing methodologies~Factor analysis</concept_desc>
<concept_significance>300</concept_significance>
</concept>
</ccs2012>
\end{CCSXML}

\ccsdesc[300]{Computing methodologies~Factor analysis}

\keywords{Time series, Tensor decomposition, Interest diffusion}


\maketitle

\section{Introduction}
    \label{sec:intro}

The rapid development and proliferation of various web services has made it possible to obtain a large amount of data on users' social activities.
Social activity data, including web search logs, purchase histories, social media posts, and the number of new infections with infectious diseases, serve as a rich tapestry reflecting varied user interests and activities~\cite{memetracker,attentionshift, flu}. 
Crucially, the effective modeling and forecasting of such data are pivotal for several domains: they aid our understanding of user behavior patterns~\cite{returnexplore,talentflow}, enhance recommendation~\cite{recommend,collabofilter}, and improve demand forecasting~\cite{DeepAR}.
Also, discovering hidden patterns and predicting future events from social activity data can provide important insights into people's decision-making.
For example, marketers want to know the regions where users are interested in the product, between which regions interest has diffused, and how interest in the product will change in the future in order to develop an appropriate advertising strategy.

Modeling and forecasting social activity data remain challenging because they are generally high-dimensional and composed of complex, dynamic temporal patterns.
For example, we consider web search volume data, a 3rd-order tensor stream consisting of three attributes (\textit{keyword}, \textit{location}, \textit{time}).
It has the following characteristics that make accurate modeling difficult: 
\textbf{(a)~Complex temporal patterns}: 
web search volume data exhibit both trends and seasonal patterns.
The nature of these trends varies depending on the keywords and countries analyzed. 
Seasonal patterns, often influenced by annual discount sales and new product launches, also differ in their intensity and timing across different keywords and countries.
In addition, web search volume data exhibit interest diffusion patterns \cite{diffusion_survey}. 
Specifically, the popularity of a keyword in one location can diffuse and affect search volumes in other locations. 
If a keyword gains interest in one country and soon after in another, it can be assumed that the interest has diffused.
Moreover, data streams may contain sudden outliers.
\textbf{(b)~Dynamic changes in those patterns over time}: 
these trends, seasonality, and diffusion patterns change dynamically over time. 
Specifically, trends may shift regularly, and the diffusion of interest is often a temporary phenomenon; for example, it may be limited to a short period immediately after a product launch. 
This dynamic shift in multiple patterns over time makes modeling with a single model difficult, as models learned from historical data may no longer be useful for predicting current or future values \cite{viz_conceptdrift, conceptdrift}.
This motivates the incremental updating of models to adapt them to the latest data.
\textbf{(c)~High-dimensional tensor streams}:
the web search volume data constitute a high-dimensional tensor stream consisting of dozens or even over one hundred countries and multiple keywords, with no time mode boundaries (i.e., semi-infinite length).
This not only increases computation time but also makes the interpretation of modeling results difficult.

In this paper, we propose \propose , designed to capture trends, seasonal patterns, and interest diffusion patterns in temporal tensor streams.
Our model incorporates the reaction-diffusion equation \cite{holmes1994partial} into a tensor decomposition framework \cite{kolda}, which represents high-dimensional tensor streams in the form of latent dynamics and factors for their multilinear summation.
This allows the extraction of interest diffusion patterns in an interpretable manner while addressing high-dimensional problems.
Furthermore, the proposed algorithm can incrementally update the parameter and adaptively detect shifting points in patterns, which allows the continuous modeling of a semi-infinite collection of tensor streams.

\subsection{Preview of our results}

\begin{figure}[t]
    \centering
    \includegraphics[width=0.99\linewidth]{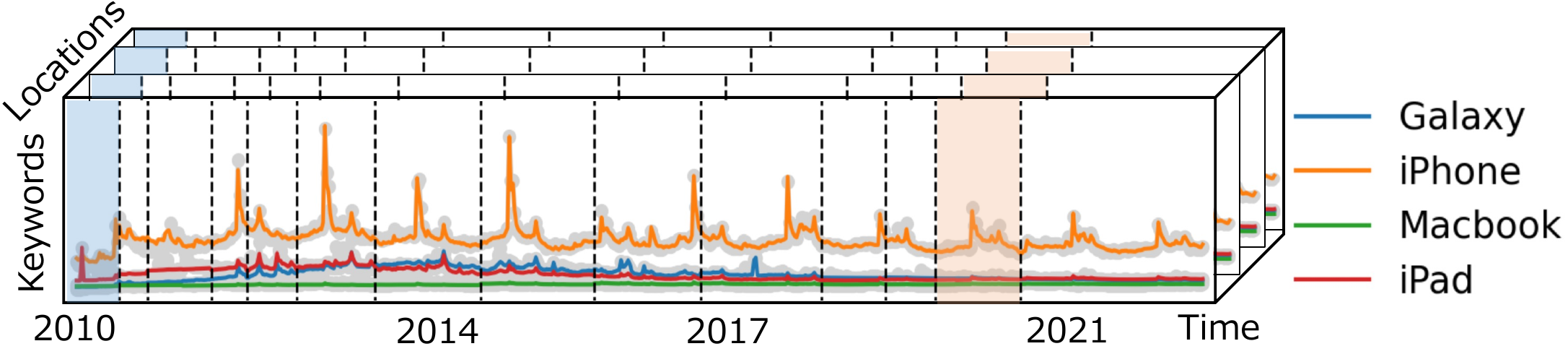}\\
    (a) Original data stream (gray lines) vs. fitting result (colored lines)\\
    
    \includegraphics[width=0.99\linewidth]{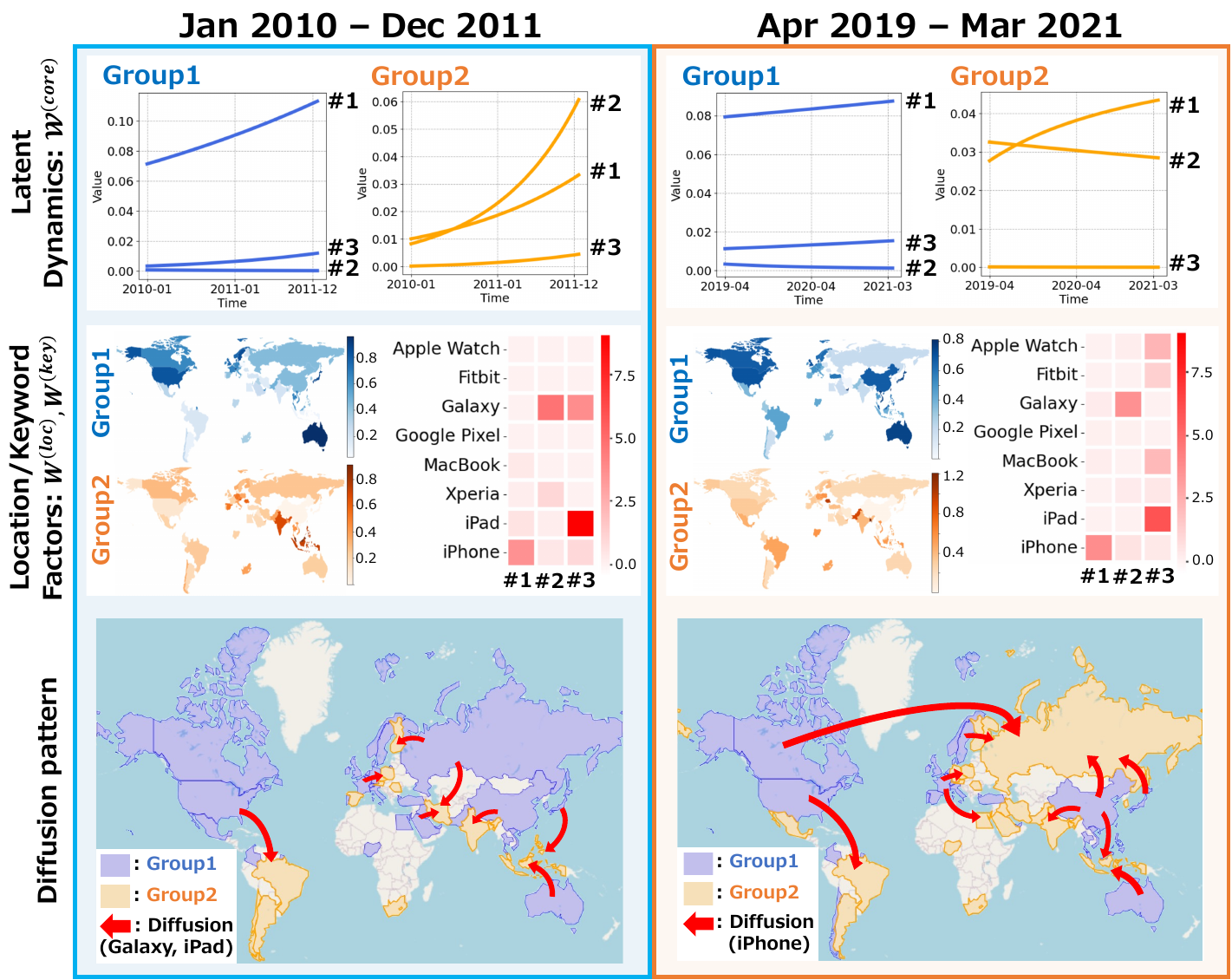} \\
    (b) Latent dynamics, location/keyword factor, and interest diffusion change dynamically
    
    \caption{Modeling power of \propose for a tensor stream related to devices: 
    (a) It fits the original tensor stream very well
    owing to model switching (at time points shown by vertical dotted lines)
    for continuous modeling and future forecasting.
    (b) It decomposes the most recent tensor
    into the latent dynamics $\Wcore$ (top) and
    the location/keyword factors $\Wloc$/$\Wkey$ (middle).
    More importantly,
    we can easily interpret patterns of interest diffusion over locations
    from the obtained model to generate $\Wcore$ (bottom).
    }
    \label{fig:figure1}
\end{figure}

\autoref{fig:figure1} shows the results of \propose for the Device-related data, which consists of weekly web search volume for eight keywords in 50 countries.
\autoref{fig:figure1} (a) shows the fitting result.
\propose can capture trends differing in terms of keywords and seasonal patterns, such as the growing interest in ``iPhone'' resulting from the release of a new model each fall.

\autoref{fig:figure1} (b) shows the modeling results for two periods (2010-2011 and 2019-2021).
Our model factorizes the input tensor into the latent dynamics $\Wcore$ (top part of \autoref{fig:figure1} (b)) and the location/keyword factors $\Wloc$, $\Wkey$ (middle part).
The latent dynamics $\Wcore$ is generated by the reaction-diffusion system, which can represent trends and inter-regional diffusion patterns, and users can easily interpret those patterns from the system's parameters.
This latent dynamics is a smaller sized tensor than the input tensor, and the input tensor is reconstructed by a multilinear summation of $\Wcore$, $\Wloc$, and $\Wkey$ (i.e., the reconstructed tensor $\Xd = \Sigma^{3}_{i=1}\Sigma^{2}_{j=1}\Wcore_{i,j} \circ \Wkey_{i}\circ \Wloc_{j}$).
The location/keyword factors are non-negative matrices, indicating the degree to which each dynamics is related to each country/keyword (A higher value indicates a stronger association between the location/keyword and the corresponding dynamics).
Note that here the latent dynamics are $(3\times 2)$-dimensional, and they are sliced and illustrated as two groups with three dynamics.
The exact definitions of $\Wcore$, $\Wloc$, $\Wkey$, and our reaction-diffusion system are given in Section \ref{sec:model}.

From the connection between the reaction-diffusion system and the two factors, we can interpret the  circumstances of web activities.
For example, in 2010-2011, Dynamics $\# 2$ of Group 2 shows a rapidly increasing trend.
The keyword factor ($\# 2$) shows that Dynamics $\# 2$ is strongly related to ``Galaxy'', and the location factor (Group 2) shows that this dynamics is strongly related to South-East Asia and South America.
These results indicate that the interest in ``Galaxy'' has an increasing pattern in these areas.
In addition, our reaction-diffusion system can describe diffusion patterns between locations.
In 2010-2011, our reaction-diffusion system showed that for Dynamics $\# 3$ there is a pattern of diffusion from Group 1 to Group 2.
This indicates that interest in ``Galaxy'' and ``iPad'', which are related to Dynamics $\# 3$, diffused from countries strongly related to Group 1 to those strongly related to Group 2. 
The bottom part of \autoref{fig:figure1} (b) shows the diffusion pattern on a map.
Furthermore, our algorithm can adapt to dynamic changes in patterns by incrementally updating the model.
Specifically, it automatically detects changes in patterns and switches the current model to a new model.
The vertical dotted lines in \autoref{fig:figure1} (a) indicate the times at which the model was switched.
By using different models in each block, \propose is able to continuously capture time-varying trends and interest diffusion in the data stream.
The right-hand side of \autoref{fig:figure1} (b) shows the modeling results for 2019-2021.
Our model captured the diffusion from Group 1 to Group 2 for Dynamics $\# 1$.
This indicates that interest in ``iPhone'' (the keyword strongly related to Dynamics $\# 1$) diffused from the US, Europe and China (Group 1) to the rest of the world (Group 2).


The main characteristics of \propose are as follows:
\begin{itemize}[leftmargin=10pt,nosep]
    \item \textbf{Interpretable}: Our proposed model incorporates the reaction-diffusion system into a tensor decomposition framework, which captures important trends and diffusion patterns in an interpretable manner
    (i.e., \autoref{fig:figure1}(b)).
    \item \textbf{Automatic}:
    The number of dimensions of the latent dynamics
    (e.g., two location groups
    and three keyword groups in \autoref{fig:figure1}(b))
    and the timing of model switching
    (e.g., the vertical dotted lines in \autoref{fig:figure1}(a)) is determined fully automatically with an objective function based on a lossless encoding scheme, which makes it easy to analyze large tensor streams.
    \item \textbf{Scalable:}
    The computation time of \propose is independent of the time series length thanks to incremental model updating and thus is a faster algorithm
    than its competitors
    (See \autoref{fig:scalability}).
    Despite the quickness (up to 8.0x faster), moreover,
    our algorithm outperforms
    the state-of-the-art forecasting approaches
    (up to 57\% improvement)
    in six real-world tensor streams.
\end{itemize}

\section{Related Work}
    \label{sec:related}

\begin{table}[t]
    \centering
    \footnotesize
    \caption{Capabilities of approaches.}
    \begin{tabular}{l|cc|cccc|c}
    \toprule
    \textbf{} & \rotatebox{90}{DISMO \cite{dismo}} & \rotatebox{90}{FluxCube \cite{fluxcube}} & \rotatebox{90}{PatchTST \cite{PatchTST}} & \rotatebox{90}{Autoformer \cite{Autoformer}} & \rotatebox{90}{DeepAR \cite{DeepAR}} & \rotatebox{90}{LaST/CoST \cite{LaST,CoST}} & \rotatebox{90}{\propose}  \\
    \midrule
    Forecasting & $\checkmark$ & $\checkmark$ & $\checkmark$ & $\checkmark$ & $\checkmark$ & $\checkmark$ &  $\checkmark$ \\
    Seasonality & $\checkmark$ & $\checkmark$ & & $\checkmark$ & & $\checkmark$ & $\checkmark$ \\
    Modeling of diffusion & & $\checkmark$ & &  &  & & $\checkmark$ \\
    Independent of data length & & & &              &              & & $\checkmark$ \\
    Parameter free    & $\checkmark$& & & & & & $\checkmark$ \\
    Outlier &  &  & & & & & $\checkmark$ \\
    \bottomrule
    \end{tabular}
    \label{tab:capability}
\end{table}

In this section, we briefly describe investigations related to this research.
\autoref{tab:capability} summarizes the relative advantages of \propose with regards to six aspects.
Only our method satisfies all the requirements for modeling social activity data streams.
We separate the details of previous studies into three categories: tensor analysis, time series forecasting, and social activity data modeling.

\mypara{Tensor analysis}
Many real-world data, such as web data, infectious disease data, and stock price data, are represented as temporal dense tensors.
Widely used tensor decomposition methods include Tucker~\cite{kolda}, PARAFAC~\cite{parafac}, and PARAFAC2~\cite{parafac2}, which have been extended to online algorithms~\cite{streamTucker,efficientNTD,onlineCP,spade,dpar2,dash} and have played an important role in temporal tensor data analysis, such as phenotype discovery~\cite{dpar2,copa,spartan}, anomaly detection~\cite{dash,zoomtucker,slicenstitch}, and clustering~\cite{faceclustering,cubescope}.
Various models have been proposed, integrating tensor decomposition with forecasting models to enable simultaneous pattern discovery and forecasting.
MLDS~\cite{MLDS} extends the Linear Dynamical System (LDS)~\cite{LDS} to model temporal tensor data through the integration of LDS and Tucker decomposition.
SMF~\cite{SMF} and SSMF~\cite{SSMF} focus on seasonality and simultaneously perform tensor decomposition and future forecasting.
However, as shown in \cite{cubecast,dismo}, these models struggle to capture nonlinear dynamics and exhibit limited predictive capability for social activity data streams, which are characterized by complex patterns that dynamically evolve over time.
By integrating PARAFAC with nonlinear differential equations, DISMO~\cite{dismo} achieves highly interpretable modeling and accurate future forecasting, capturing latent nonlinear dynamics.
However, it cannot capture inter-regional diffusion.

\mypara{Time series forecasting}
Conventional approaches to time series forecasting are based on statistical models such as autoregression (AR), Kalman filters (KF)~\cite{KF}, and their extensions \cite{MLDS,statespace,onlineARIMA,cubecast}.
In recent years, many models have been based on deep neural networks (DNN) thanks to their rapid development~\cite{surveydeep}.
In particular, models based on convolutional neural networks (CNN)~\cite{TCN,convolution} and recurrent neural networks (RNN)~\cite{DeepAR} capture higher-order temporal dependencies and achieve high forecasting accuracy.
More recently, several models that incorporate Transformer~\cite{transformer} have been proposed and have achieved high accuracy in long-term forecasting~\cite{Informer,Autoformer,Fedformer,PatchTST}.
PatchTST~\cite{PatchTST} incorporates patching and efficiently learns dependencies between sequences by decomposing a multi-channel time series into multiple single-channel time series.
Autoformer~\cite{Autoformer} captures periodic dependencies by decomposing the sequence into seasonal and trend components, subsequently computing autocorrelation to aggregate similar sub-sequences.
Furthermore, LaST~\cite{LaST} and CoST~\cite{CoST} infer a couple of representations depicting trends and seasonality of time series based on representation and contrastive learning, respectively.
However, while these DNN-based methods record high accuracy for various datasets, they still suffer as regards interpretability. 
In addition, training the model requires substantial computational resources, making it unsuitable for applications in streaming scenarios.

\mypara{Social activity data modeling}
Social activity data, such as web search logs, purchase histories, social media posts, and the number of people infected with infectious diseases, reflect users' interests and behaviors \cite{memetracker,attentionshift,interest}.
Modeling of such data has been applied in various areas such as user modeling \cite{talentflow,returnexplore}, epidemic forecasting \cite{flu}, demand forecasting \cite{DeepAR}, recommendation systems \cite{recommend,collabofilter}, and purchase prediction in online shopping \cite{eshop_clickstream}.
As models for web search volume data, those that capture the competitive relationships among keywords are proposed \cite{ecoweb,compcube}.
In addition, FluxCube~\cite{fluxcube} is based on the idea of physics-informed neural networks~\cite{physics_informed}, and models diffusion patterns in temporal tensors, but it is difficult to apply to streaming scenarios as it requires significant computation time to estimate latent regional groups.

\section{Proposed Model}
    \label{sec:model}

\begin{table}[t]
    \centering
    \caption{Symbols and definitions}
    \scalebox{0.80}{
    \begin{tabular}{l|l}
        \bhline{1.2pt}
        Symbol & Definition \\ \hline
        $k$, $l$ & Number of keywords and locations \\ 
        $t_c$ & Current time point \\ 
        $\X$ & Tensor stream, i.e., $\X \in \mathbb{N}^{k\times l\times t_c}$ \\
        $\lc$, $\lf$ & Length of current tensor and forecasting step \\
        $\Xc$,  & Current tensor, i.e., $\Xc = \X_{t_{c}-\lc : t_c}$ \\
        $d_{k},d_{l}$ & Number of latent dynamics for keywords and locations \\ 
        $d_s$ & Number of seasonal patterns \\ \hline
        $\Xd$, $\Xs$, $\Xo$ & Trend tensor, seasonal tensor, and outlier tensor \\
        $\Wcore$ & Latent dynamics generated from reaction-diffusion system \\
        $\Wkey ,\Wloc$ & Keyword and location factors \\
        $\Stime$ & Matrix including $d_s$ seasonal patterns \\
        $\Skey ,\Sloc$ & Seasonal factors for each keyword/location \\
        $\theta_{d}, \theta_{s}$ & Parameter sets for trend and seasonal tensors \\
        $\Theta$ & Single model parameter set, i.e., $\Theta = \{ \theta_{d}, \theta_{s}, \Xo \}$ \\
        $\F$ & Full parameter set, i.e., $\F = \{ \Theta_{1}, \Theta_{2}, ... , \Theta_{N}\}$ \\
        \bhline{1.2pt}
    \end{tabular}
    }
    \label{tab:definition}
\end{table} 
In this section, we describe our proposed model, \propose , for modeling and forecasting social activity data. 
First, we define the specific problem and related symbols, followed by the background of the proposed model, and finally, we describe the details of the proposed model.

\subsection{Problem definition}
\autoref{tab:definition} lists the main symbols that we use throughout this paper.
We assume that we have a semi-infinite collection of social activity data $\mathcal{X}=$ $\{ X_1, ... , X_t, ... , X_{t_c}\}$, where $t_c$ indicates the current time point and increases with every new time point.
We can obtain a new observation $X_{t_{c}+1}$ at every time point, and thus, the total size of $\mathcal{X}$ increases. 
Each entry $X_t=\{ x_{ij}(t) \}^{k,l}_{i,j=1}$ represents the total number of searches in $l$ locations for $k$ keywords. 
That is, the data observed up to $t_c$ is represented as a 3rd-order tensor, i.e., $\mathcal{X}\in\mathbb{N}^{k\times l\times t_c}$.
In an environment where new data are generated successively and the sequence length is semi-infinite, we must consider retaining only the most recent part of the data that is smaller than all of the observed data due to time and memory limitations.
Therefore, we define the input window as $\Xc =$ $\{ X_{t} \}^{t_c}_{t=t_c - \lc}$.
We formally define our problem as follows.

\begin{problem}[stream forecasting]
    \textbf{Given} the most recent $\lc$-long tensor $\Xc = \{ X_{t} \}^{t_c}_{t=t_{c} - \lc}$, where $t_c$ is the current time point and $\lc$ is the length of the current window, we want to \textbf{find} time-varying dynamics consisting of trends, seasonal patterns, and interest diffusion and \textbf{forecast} $\lf$-steps ahead values $X_{t_c + \lf} \in \mathbb{R}_{+}^{k\times l}$ continuously.
\end{problem}

For simplicity, in Sections \ref{sec:model} and \ref{sec:algo}, we consider the case of modeling web search volume data with the attributes ``keyword" and ``location", and use the terms ``keyword factor" and ``location factor", which can be generalized to other data.

\subsection{Background -- Reaction-diffusion equation}
Our model is inspired by the reaction-diffusion equation \cite{holmes1994partial}, which describes physical phenomena such as changes in chemical substances in time and space.
It is a partial differential equation composed of two terms, namely
\textit{the reaction term} and \textit{the diffusion term}.
The general form of the reaction-diffusion equation can be described with the following equations:

{\small
\begin{equation}
    \frac{\partial u_i}{\partial t} = f(u_i) + \sum_j D\cdot (u_{j}-u_{i}), \nonumber
\end{equation}}
\normalsize
where $u_i$ represents the concentration of the chemical at position $i$ and $D$ is the strength of diffusion.
The first term on the right-hand side is the reaction term, which represents the local dynamics of the chemical.
The second term is the diffusion term, which describes the diffusion of a chemical between points.
We consider applying reaction-diffusion equations to represent dynamics in social activity data streams.
In the context of the web search volume data, we assume that interest in a keyword diffuses from locations with high interest to those with low interest~\cite{diffusion_survey}, just as chemicals move from high-concentration areas to low-concentration areas~\cite{fick}.


\subsection{D-Tracker model}

Here, we present our model
for modeling interest diffusion in social activity data.
Figure \ref{fig:overview} shows an overview of our model.

\begin{figure}[!t]
    \centering
    \includegraphics[width=0.90\linewidth]{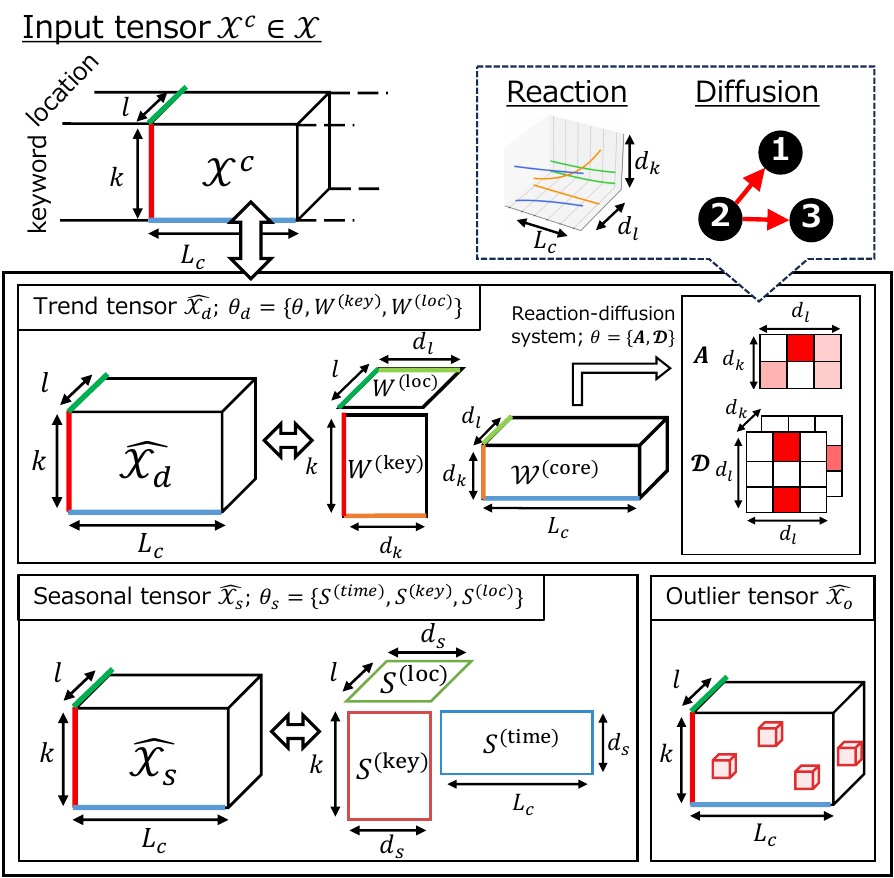}
    \caption{An overview of the \propose model: The current tensor $\Xc$ is approximated by the sum of the trend tensor $\Xd$, the seasonal tensor $\Xs$, and the outlier tensor $\Xo$. 
    }
    \label{fig:overview}
\end{figure}

The main challenge facing social activity modeling
is the accurate identification of trends and diffusion patterns
masked by multiple kinds of
seasonal patterns and sudden outliers.
To address this,
we assume the following three components:
a trend tensor $\Xd \in \mathbb{R}^{k \times l\times L_c}$,
which contains only trend and diffusion patterns,
a seasonal tensor $\Xs \in \mathbb{R}^{k \times l\times L_c}$,
which contains only seasonal patterns,
and an outlier tensor $\Xo \in \mathbb{R}^{k \times l\times L_c}$,
which contains only outliers.
The current tensor $\Xc$ is then approximated by
the sum of the components as follows.
\begin{align}
    \label{eq:ST}
    \Xc\ \approx\ \Xd\ +\ \Xs\ +\ \Xo.
\end{align}

The trend tensor $\Xd$ represents trends
for all pairs of keywords and locations
where we assume some of the locations have diffusion patterns.
Since $\Xd$ is still high-dimensional
and it is hard to interpret important trends,
we decompose the tensor into some latent factors
that can reconstruct it sufficiently.
Letting $d_k$ and $d_l$ be
the number of dimensions for such latent factors
of keywords and locations, respectively,
such that $d_k\ll k$ and $d_l\ll k$,
$\X_d$ is decomposed as: 
\begin{align}
    \label{eq:Xd}
    \Xd &= \Wcore \times_{\rm{key}} \Wkey \times_{\rm{loc}} \Wloc, 
\end{align}
where $\times_{\rm{key}}$ and $\times_{\rm{loc}}$ denote the \textit{keyword}-mode product and \textit{location}-mode product, respectively.
$\Wkey \in \mathbb{R}_{+}^{d_k \times k}$ and
$\Wloc \in \mathbb{R}_{+}^{d_l \times l}$
are non-negative matrices
used to reconstruct $\Xd$ from  
a compressed tensor $\Wcore \in \mathbb{R}^{d_k \times d_l\times L_c}$.
Crucially, $\Wcore$ describes the primary dynamics 
(i.e., trends and interest diffusion)
of the original tensor.
We thus assume that the latent dynamics $\Wcore$ is sequentially generated
by the following reaction-diffusion system.
{\small
\begin{equation}
\begin{split}
    \frac{dw_{ij}}{dt} = a_{ij}w_{ij} + \sum_{j'}d_{ijj'}(w_{ij'}-w_{ij}),
    \ (1\leq i\leq d_{k}, 1\leq j\leq d_{l}).
\end{split}
\label{eq:reactiondiffusion}
\end{equation}
}
The first term on the right-hand side is the reaction term, which expresses the increasing or decreasing trend of each of the dynamics.
The second term is the diffusion term, which expresses the pattern of diffusion among multiple dynamics.
The parameters $a_{ij}$ and $d_{ijj'}$ are interpreted as follows.
\begin{compactitem}
    \item $a_{ij}$: growth rate of dynamics $w_{ij}$; If $a_{ij}>0$, $w_{ij}$ shows an increasing trend; if $a_{ij}<0$, it shows a decreasing trend.
    \item $d_{ijj'}$: strength of diffusion between $w_{ij}$ and $w_{ij'}$; If $d_{ijj'}>0$, values of $w_{ij'}$ diffuse to $w_{ij}$.
\end{compactitem}
The entire system consists of $d_{k}\cdot d_{l}$ reaction-diffusion equations, and 
has $\mathbf{A}\in \mathbb{R}^{d_{k}\times d_{l}}$ and $\mathcal{D}\in \mathbb{R}_{+}^{d_{k}\times d_{l}\times d_{l}}$ as parameters.

\begin{definition}[Reaction-diffusion system: $\theta$]
Let $\theta = \{ \mathbf{A}, \mathcal{D}\}$ be a parameter set of our reaction-diffusion system.
Also, let $f_{\theta}(\theta ,w_0,L)$ denote the generation of an $L$-long time series, $\Wcore \in \mathbb{R}^{d_k \times d_l\times L}$,
with a given $\theta$
from the initial value $w_0\in \mathbb{R}_{+}^{d_k \times d_l}$.
\end{definition}
\noindent
Overall, the parameter set for representing the trend tensor $\Xd$
is now defined as follows.
\begin{definition}[Parameters for trend tensor: $\theta_d$]
Let $\theta_d = \{ \Wkey, \Wloc, \theta \}$ be a parameter set for a trend tensor $\Xd$.
\end{definition}

\mypara{Interpretability of $\theta_d$}
For any pair of a keyword $u\in\{1,\dots,k\}$
and a location $v\in\{1,\dots,l\}$,
the time derivative of univariate time series
$\X_{uv}\in\mathbb{R}_+^{L}$
is derived by Equations (\ref{eq:Xd}) and (\ref{eq:reactiondiffusion})
as follows.
{\small
\begin{align}
\frac{d\X_{uv}}{dt} 
=
\underbrace{\sum_{i=1}^{d_k}\sum_{j=1}^{d_l}\Wkey_{iu}\Wloc_{jv}}_{\textrm{Multi-linear projection}}
\biggl(
\underbrace{a_{ij}w_{ij}+\sum_{j'=1}^{d_l}d_{ijj'}(w_{ij'}-w_{ij})}_{\textrm{Latent reaction-diffusion system}}
\biggr). \nonumber
\end{align}
}%
The latent reaction-diffusion system describes $d_k\cdot d_l$ kinds of
important dynamics, which allow us to interpret $d_k$ 
latent keyword groups and $d_l$ 
latent location groups. 
The multi-linear projection then quantifies how strongly the latent groups contribute to the
actual search volumes for the $u$-th keyword at the $v$-th location.
Since the projection with non-negative matrices never inverts the relationships between latent dynamics, the interpretability of $\theta$ is preserved throughout the reconstruction process using $\theta_d$.

As with the trend tensor, we factorize the seasonal tensor $\Xs$ and obtain a compact representation of seasonality.
The decomposition we consider is expressed as:
\begin{align}
    \label{eq:Xs}
    \Xs =
    \Stime \times_{\rm{key}} \Skey \times_{\rm{loc}} \Sloc,
\end{align}
where $d_s$ is the number of latent seasonal patterns.
$\Stime \in \mathbb{R}^{d_s \times \lc}$ denotes the seasonal matrix, and $\Skey \in \mathbb{R}^{d_s \times k}$ and $\Sloc \in \mathbb{R}^{d_s \times l}$ denote the keyword/location factors, representing the degree to which each of the $d_s$ seasonal patterns is related to each keyword and location.

\begin{definition}[Parameter set for seasonal tensor: $\theta_s$]
Let $\theta_s = \{ \Stime, \Skey \Sloc \}$ be a parameter set for a seasonal tensor.
\end{definition}

In addition, the outlier tensor $\Xo$ stores only anomalous values.
On the assumption that such remarkable abnormal behaviors are rare, $\Xo$ should be sparse.

\begin{definition}[Single model parameter set    : $\Theta$]
Let $\Theta = \{ \theta_d , \theta_s , \Xo \}$ be a single model parameter set.
It represents trends, diffusion patterns, seasonal patterns, and outliers in the input tensor.
\end{definition}

Thus far, we have described three sets of parameters capable of capturing latent dynamics, such as trends, interest diffusion, seasonality, and outliers within the input tensor. 
However, in real-world social activity data, temporal patterns change dynamically.
The final challenge is to extend the model so that it effectively captures these time-varying patterns in tensor streams.
Our method adaptively captures time-varying patterns by switching between multiple models.
Specifically, let $N$ denote the appropriate number of patterns up to the current time point.
In this context, a tensor stream $\X$ is represented using a set of $N$ models, i.e., $\{ \Theta_{1},\Theta_{2},\dots, \Theta_{N} \}$.
The full parameter set we want to estimate is eventually defined as follows.

\begin{definition}[Full parameter set: $\F$]
Let $\F=\{\Theta_{1},\Theta_{2},
\dots,\Theta_{N} \}$ be a full parameter set of \propose .
\end{definition}

\section{Optimization Algorithms}
    \label{sec:algo}

\begin{algorithm}[t]
    \footnotesize
    \caption{\propose ($\X$)}
    \label{alg:DTracker}
\begin{algorithmic}[1]
    \REQUIRE
        Tensor stream $\X$ \\
    \ENSURE
        Full parameter set $\F$
        and predicted tensor stream $\Xhat$ \\
\textbf{Initialization:}
\STATE 
    $\{\Theta , d_{k}, d_{l}, d_{s}\}\leftarrow \rm{Initialize}(\mathcal{X}_{0:\lc})$;
    \ \ 
    $\F = \{ \Theta \}$;
\\
\textbf{Stream processing:}
\FOR{$t_c = L_{c}$ to $n$}
    \STATE $\Xc = \X_{t_{c}-L_{c}:t_{c}}$;
    \STATE $\Theta' =$ \ModelEstimation $(\Xc ,d_{k}, d_{l}, d_{s})$;
    \STATE $\{\F ,d_{k}, d_{l}, d_{s}\}=$ \ModelUpdate $(\Xc ,\F ,\Theta' ,d_{k}, d_{l}, d_{s})$;
    \STATE $\Xhat_{t_c + \lf} = \rm{Forecast}(\F)$;
\ENDFOR
\STATE {\bf return} $\F ,\Xhat$;
\end{algorithmic}
\end{algorithm}

The previous section described the mathematical concepts that enable us to obtain a reasonable representation of latent dynamics in the input tensor.
In this section, we present effective algorithms for estimating the full parameter set $\F$.

A significant challenge in parameter estimation lies in determining the optimal structure of the model (i.e., the optimal values of $d_{k}, d_{l}, d_{s}$ and the optimal number of $\Theta$ in full parameter set $\F$), which requires users to perform a difficult, time-consuming analysis.
Moreover, pre-defined models become outdated as we observe new tensors.
Therefore, we propose an algorithm to obtain an optimal summary of tensor streams while automatically determining the structure of the model without manual intervention.
The proposed algorithm is based on the minimum description length (MDL) principle and continuously updates the full parameter set while processing the current tensor $\Xc$ in a streaming fashion.
Algorithm \ref{alg:DTracker} shows the overall procedure of \propose .
Specifically, the algorithm consists of the two sub-algorithms:
(1) \ModelEstimation : given the current tensor $\Xc$ and current rank $d_{k}, d_{l}, d_{s}$, it estimates the optimal parameter $\Theta$ from scratch;
(2) \ModelUpdate : given the current tensor $\Xc$ and the newly estimated parameters $\Theta$, it updates the full parameter set $\F$.
Specifically, it determines whether or not to add $\Theta$ to $\F$ based on the total cost.
It also updates $d_{k}, d_{l}, d_{s}$.

\subsection{ModelEstimation}
\label{subsec:ModelEstimation}
Here, we describe \ModelEstimation , an optimization algorithm that estimates the optimal parameters 
$\Theta$.
To estimate the optimal parameters, we propose an optimization method based on alternating least squares (ALS).
Specifically, we alternate between updating the reaction-diffusion system and updating each factor for the trend tensor, each factor for the seasonal tensor, and the outlier tensor, while keeping the other parameters fixed.
The overall \ModelEstimation algorithm is shown as Algorithm \ref{alg:ModelEstimation} in Appendix \ref{appendix:ModelEstimation}.

\subsubsection{Update of $\theta_d$}
In this step, $\theta$, $\Wkey$ and $\Wloc$ are alternately updated to minimize the error between $\Xc - \Xs - \Xo$ and $\Xd$.
The parameters of the reaction-diffusion system, $\theta$, and its initial value are updated as follows, with the other parameters fixed:
\footnote{In the optimization, we apply the Levenberg-Marquardt (LM) algorithm~\cite{lm}, which can solve the nonlinear least squares minimization problem effectively.}
{\small
\begin{align}
\label{eq:lmfit}
    \{ \theta ,w_0 \} = \argmin_{\theta', w'_0} ||&\Xc - \Xs - \Xo \nonumber \\
    &-  f_{\theta}(\theta',w'_0,L_{c}) \times_{\rm{key}} \Wkey \times_{\rm{loc}} \Wloc||.
\end{align}
}

Based on \cite{NTD}, $\Wkey$ and $\Wloc$ can be updated as follows, with the other parameters fixed:
\small
\begin{align}
\label{eq:factorupdate}
    \numerator &= \max (\epsilon , \Xdmode \cdot \Gmode ), \nonumber \\
    \denominator &= \max (\epsilon , \Wmode \cdot \Gmode \cdot \GmodeT ), \nonumber \\
    \Wmode &= \Wmode \otimes \numerator \oslash \denominator ,
\end{align}
\normalsize
where $\cdot$, $\otimes$, and $\oslash$ are the dot product, the Hadamard product, and the element-wise division, respectively, and $\epsilon$ is a small constant.
``(mode)'' represents either ``(key)'' or ``(loc)''.
Note that $\Xdmode$ and $\Gmode$ are calculated as follows.
\small
\begin{align}
    \Xdmode &= \unfold{\Xc - \Xs - \Xo}{\rm{mode}}, \nonumber \\
    \Gmode &= \unfold{\Wcore \times_{\rm{\overline{mode}}} \Wmoden}{\rm{mode}}.
\end{align}
\normalsize
The function ``unfold'' converts a tensor into a matrix based on the specified mode.
$\overline{mode}$ represents the opposite mode to \textit{mode} (i.e., when ``(mode)'' is ``(key)'', ``($\rm{\overline{mode}}$)'' represents ``(loc)'').

\subsubsection{Update of $\theta_s$}
In this step, $\Stime$, $\Skey$ and $\Sloc$ are alternately updated to minimize the error between $\Xc - \Xd - \Xo$ and $\Xs$.
Based on \cite{kolda}, the update formula 
is expressed as follows:

\small
\begin{align}
    \Umode &= \odot_{i \neq \rm{mode}}^{M} \Si , 
    \Vmode = \otimes_{i \neq \rm{mode}}^{M} \Si{}^\mathsf{T} \Si ,\nonumber \\
    \Smode &= \Xsmode \Umode \Vmode{}^\dagger .
\end{align}
\normalsize
where $\dagger$ indicates the Moore-Penrose pseudoinverse, $M$ is the set of all modes, i.e., $M = \{  \rm{time}, \rm{key}, \rm{loc}\}$, and $\Xsmode$ is calculated as 
\small
\begin{align}
    \Xsmode &= \unfold{\Xc - \Xd - \Xo}{\rm{mode}}.
\end{align}
\normalsize

\subsubsection{Update of $\Xo$}
In this step, the outlier tensor $\Xo$ is estimated by sparsifying the residual tensor $\Xc -\Xd - \Xs$.
The criterion for sparsification is the MDL-based cost function expressed in Equation (\ref{eq:totalcost}) (cost function is detailed in Section \ref{subsec:MDL}).
We sparsify the residual tensor so that the total cost is minimized.
For example, if the element $\Xo{}_{ijk}$ is negligibly small, the cost function suggests setting its value at zero, i.e., $\Xo{}_{ijk}=0$.
By retaining only elements that reduce the total cost and setting the other elements at zero, we can obtain a sparse outlier tensor $\Xo$ that minimizes the total cost.
This allows us to estimate the outlier tensor automatically without manual thresholds.

The time complexity of \ModelEstimation is as follows.
\begin{lemma}
\label{lemma1}
    The time complexity of \ModelEstimation is
    $O(d_{k}d_{l}^{2} + kd_{k} + ld_{l})$. 
    See Appendix \ref{appendix:proof} for details.
\end{lemma}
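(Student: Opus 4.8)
The plan is to bound the cost of one call to \ModelEstimation by walking through its constituent update rules, summing their costs, and observing that nothing in the resulting expression depends on the stream position $t_c$. First I would fix the accounting conventions that make the statement meaningful: the window length $L_c$ is a constant (it is a fixed-size sliding window, not a growing quantity), the seasonal rank $d_s$ is treated as a constant comparable to $d_k$ and $d_l$, and the number of outer ALS sweeps as well as the number of inner Levenberg--Marquardt iterations are treated as constants, since both loops run to a fixed tolerance. Under these conventions it suffices to bound a single ALS sweep and multiply by $O(1)$.

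Next I would bound each sub-update in a sweep. \textbf{Update of $\theta_d$:} generating $\Wcore = f_\theta(\theta, w_0, L_c)$ means integrating the $d_k d_l$ coupled equations of Equation~(\ref{eq:reactiondiffusion}); the right-hand side of each contains a diffusion sum over the $d_l$ indices $j'$, so one integration step costs $O(d_k d_l^2)$ and the whole window costs $O(L_c d_k d_l^2) = O(d_k d_l^2)$. The fit of $\{\theta, w_0\}$ in Equation~(\ref{eq:lmfit}) invokes this simulation and the associated Jacobian a constant number of times on $O(d_k d_l^2)$ parameters, which stays $O(d_k d_l^2)$. The multiplicative updates of $\Wkey \in \mathbb{R}_+^{d_k \times k}$ and $\Wloc \in \mathbb{R}_+^{d_l \times l}$ in Equation~(\ref{eq:factorupdate}) touch $O(k d_k)$ and $O(l d_l)$ entries respectively, and the matrix products feeding $\numerator$ and $\denominator$ are formed once per sweep. \textbf{Update of $\theta_s$:} the Tucker-style least-squares updates of $\Stime$, $\Skey$, $\Sloc$ are analogous with rank $d_s$, contributing $O(k d_s + l d_s)$, which is absorbed since $d_s$ is constant. \textbf{Update of $\Xo$:} a bounded number of element-wise, MDL-guided passes over the fixed-window residual. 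Collecting these contributions over the constant number of sweeps gives $O(d_k d_l^2 + k d_k + l d_l)$, with no term involving $t_c$ — which is the substance of the claim.

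I expect the main obstacle to be the dense operations over the window whose naive cost contracts over the $l L_c$ (resp.\ $k L_c$) dimension: the products that build $\numerator$ and $\denominator$ in Equation~(\ref{eq:factorupdate}) and the residual passes used to sparsify $\Xo$. One must argue carefully that, under the constant-window convention and after exploiting the factored form $\Gmode = \unfold{\Wcore \times_{\overline{\mathrm{mode}}} \Wmoden}{\mathrm{mode}}$ so that these products are reused rather than recomputed across sweeps, their cost is subsumed by the stated $k d_k$ and $l d_l$ terms rather than introducing a larger factor. The other point that needs to be made precise is the uniform $O(1)$ bound on the ALS and Levenberg--Marquardt iteration counts, i.e.\ that convergence does not degrade with $k$, $l$, or stream position. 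The remaining steps are routine bookkeeping, and the details are deferred to Appendix~\ref{appendix:proof}.
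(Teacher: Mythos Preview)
Your proposal is correct and follows essentially the same approach as the paper's proof: bound each update step separately---the reaction-diffusion fit contributing $O(d_k d_l^2)$ from its parameter count, the multiplicative factor updates contributing $O(k d_k + l d_l)$---and then absorb the outer ALS iteration count as a constant. Your treatment is in fact more careful than the paper's, which omits the seasonal and outlier updates entirely and does not address the dense matrix products you rightly flag as needing a constant-$L_c$ convention to fit under the stated bound.
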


\subsection{Automatic tensor compression}
\label{subsec:MDL}
Here, we propose a criterion for automatically choosing the best model structure of \propose , $\F$, for a given tensor stream $\X$.
We apply the minimum description length (MDL) principle, which enables us to determine the nature of a good summarization by minimizing the sum of the model description cost and the data encoding cost.
Specifically, the total cost of $\F$ for $\X$ is described as below:
\begin{align}
\label{eq:totalcost}
    \costT{\mathcal{X}}{\F} &= \costM{\F} + \costC{\mathcal{X}}{\F},
\end{align}
where $\costM{\F}$ shows the model description cost, and $\costC{\mathcal{X}}{\F}$ represents the cost of describing the data $\X$ given the model $\F$.
In short, it follows the assumption that the more we can compress the data, the more we can learn about its underlying patterns.

\mypara{Model description cost}
The model description cost $\costM{\F}$ is defined as the number of bits needed to store the model parameters.
Our model description cost is represented as shown below:
\begin{align}
    \costM{\F} = \sum\limits_{\Theta \in \F} \costM{\Theta},\ \ 
    \costM{\Theta} = \costM{\theta_d} + \costM{\theta_s} + \costM{\Xo}. \nonumber 
\end{align}
The cost of $\theta_d$ is further decomposed as follows.
\small
\begin{align}
    \label{eq:rds}
    \costM{\theta_d} &= \costM{\Wkey} + \costM{\Wloc} + \costM{\theta}, \nonumber \\
    \costM{\Wkey} &= |\Wkey|\cdot (\log d_k + \log k + \cF) + \log^{*} |\Wkey|, \nonumber \\
    \costM{\Wloc} &= |\Wloc| \cdot (\log d_l + \log l + \cF) + \log^{*} |\Wloc|, \nonumber \\
    \costM{\theta} &= \costM{\mathbf{A}} + \costM{\mathcal{D}}, \nonumber \\
    \costM{\mathbf{A}} &= |\mathbf{A}| \cdot (\log d_k + \log d_l + \cF) + \log^{*} |\mathbf{A}|, \nonumber \\
    \costM{\mathcal{D}} &= |\mathcal{D}| \cdot (\log d_k + 2\log d_l + \cF) + \log^{*} |\mathcal{D}|, \nonumber
\end{align}
\normalsize
where $|\cdot|$ shows the number of nonzero elements for a given vector/matrix, $\log^{*}$ is the universal code length for integers, and $c_F$ is the float point cost\footnote{We used $c_F = 32$ bits.}.
Similarly, the costs of $\theta_s$ and $\Xo$ are defined as follows.
\small
\begin{align}
    \costM{\theta_s} &= \costM{\Stime} + \costM{\Skey} + \costM{\Sloc}, \nonumber \\
    \costM{\Stime} &= |\Stime| \cdot (\log d_s + \log L_c + \cF) + \log^{*} |\Stime|, \nonumber \\
    \costM{\Skey} &= |\Skey| \cdot (\log d_s + \log k + \cF) + \log^{*} |\Skey|, \nonumber \\
    \costM{\Sloc} &= |\Sloc| \cdot (\log d_s + \log l + \cF) + \log^{*} |\Sloc|, \nonumber \\
    \costM{\Xo} &= |\Xo| \cdot (\log n + \log k + \log l + \cF) + \log^{*} |\Xo|. \nonumber
\end{align}
\normalsize

\mypara{Data encoding cost}
The data encoding cost measures how well the given model compresses the original data. 
The Huffman coding scheme \cite{huffman} enables us to encode $\X$ using $\F$.
It assigns a number of bits to each element in $\X$, which is the negative log-likelihood under a Gaussian distribution with mean~$\mu$ and variance~$\sigma ^{2}$, i.e.,
\begin{align}
    \costC{\mathcal{X}}{\F} = \sum_{x \in \X} -\log_2 p_{\mu, \sigma}(x-\hat{x}_d -\hat{x}_s -\hat{x}_o). \nonumber
\end{align}
where $\hat{x}_d , \hat{x}_s$ shows the reconstruction values of $x$ obtained from \autoref{eq:Xd} and \autoref{eq:Xs}.

\subsection{ModelUpdate}
\label{subsec:ModelUpdate}
Finally, the problem we want to solve is how to obtain a good summarization of the tensor stream while detecting model switching.
The overall \ModelUpdate algorithm is shown as Algorithm \ref{alg:ModelUpdate} in Appendix \ref{appendix:modelupdate}).
The main idea is to switch to the new model when it results in a reduction of the total cost of $\X$.
Given the current tensor $\Xc$, the full parameter set $\F$, and the candidate model parameter $\Theta$, the algorithm compares the total cost of adding $\Theta$ to $\F$ with the total cost without the addition. 
If the total cost is lower with the new model, we add the candidate model parameter to the full parameter set and model the current tensor using the new model.
In addition, the optimal values of $d_k$, $d_l$, and $d_s$ may change over time.
We update these values based on the total cost to better represent the current tensor.
We assume that the optimal values change gradually in streaming scenarios and only search for cases where only one of these values changes by one (See Algorithm \ref{alg:RankUpdate}: \RankUpdate in Appendix \ref{appendix:modelupdate} for details).

\section{Experimental Evaluation}
    \label{sec:experiment}

\begin{table*}[h]
    \centering
    \caption{Forecasting performance comparison
        respectively
    ($\times 10^{-2}$).
    }
    \scalebox{0.80}
    {
    \begin{tabular}{c|l|c|cc|cc|cc|cc|cc|cc|cc|cc}
        \toprule
         \multicolumn{3}{c|}{Method}& \multicolumn{2}{c|}{{\small \propose}} & \multicolumn{2}{c|}{{\small DISMO}} & \multicolumn{2}{c|}{{\small FluxCube}} & \multicolumn{2}{c|}{{\small PatchTST}} & \multicolumn{2}{c|}{{\small Autoformer}} & \multicolumn{2}{c|}{{\small DeepAR}} & \multicolumn{2}{c|}{{\small LaST}} & \multicolumn{2}{c}{{\small CoST}} \\ 
        \cmidrule(l{0mm}r{0mm}){1-3}
        \cmidrule(l{1.5mm}r{1.5mm}){4-5}
        \cmidrule(l{1.5mm}r{1.5mm}){6-7}
        \cmidrule(l{1.5mm}r{1.5mm}){8-9}
        \cmidrule(l{1.5mm}r{1.5mm}){10-11}
        \cmidrule(l{1.5mm}r{1.5mm}){12-13}
        \cmidrule(l{1.5mm}r{1.5mm}){14-15}
        \cmidrule(l{1.5mm}r{1.5mm}){16-17}
        \cmidrule(l{1.5mm}r{1.5mm}){18-19}
        ID&Dataset & $L_f$ & {\small MAE} & {\small RMSE} & {\small MAE} & {\small RMSE} & {\small MAE} & {\small RMSE} & {\small MAE} & {\small RMSE} & {\small MAE} & {\small RMSE} & {\small MAE} & {\small RMSE} & {\small MAE} & {\small RMSE} & {\small MAE} & {\small RMSE} \\ 
        \midrule
        \#1&\device &13&\textbf{0.84}&\textbf{1.82}&\underline{1.08}&2.18&3.86&7.20&1.39&3.20&1.21&\underline{2.03}&2.14&2.99&1.91&4.00&1.63&3.45 \\
              &&26&\textbf{0.90} &\textbf{1.95}&\underline{1.19}&2.45&3.85&7.26&1.54&3.19&1.31&\underline{2.13}&2.21&3.16&2.07&4.31&1.80&3.79 \\
              &&39&\textbf{0.97}&\textbf{2.18}&1.22&2.60&3.97&7.63&\underline{1.21}&2.76&1.34&\underline{2.22}&2.37&3.34&2.37&5.83&1.74&3.77 \\ 
        \midrule
        \#2&\pythonlib &13&\textbf{0.27}&\textbf{1.42}&0.47&1.74&0.86&3.57&0.89&4.26&0.49&\underline{1.45}&2.01&3.84&0.71&2.54&\underline{0.40}&2.46 \\
        &&26&\textbf{0.32}&\underline{1.54}&0.52&1.96&0.93&3.68&0.95&4.17&0.47&\textbf{1.35}&2.01&3.84&0.86&2.80&\underline{0.45}&2.76 \\
        &&39&\textbf{0.37}&\underline{1.78}&0.57&2.22&1.09&4.08&0.83&3.33&\underline{0.42}&\textbf{1.52}&2.01&3.86&0.92&2.99&0.47&3.08 \\ 
        \midrule
        \#3&\vod &13&\textbf{0.53}&\underline{1.58}&\underline{0.56}&1.64&2.83&6.96&0.95&2.17&\underline{0.56}&\textbf{1.33}&1.77&2.23&2.91&13.26&2.98&14.29 \\
        &&26&\textbf{0.59}&1.83&\underline{0.66}&1.91&2.83&6.95&0.84&\underline{1.82}&0.74&\textbf{1.63}&1.88&2.42&2.53&11.42&2.36&11.17 \\
        &&39&\textbf{0.63}&\textbf{2.02}&\underline{0.78}&2.22&2.79&6.84&1.69&2.57&1.13&\underline{2.07}&1.98&2.61&1.55&5.50&3.27&15.18 \\ 
        \midrule
        \#4&\chatapp &13&\textbf{0.28}&\underline{0.83}&\underline{0.47}&1.16&1.47&4.06&\underline{0.47}&\textbf{0.81}&0.84&1.34&1.85&2.35&2.66&13.16&1.76&7.30 \\
        &&26&\textbf{0.34}&\textbf{1.01}&\underline{0.48}&1.21&1.48&4.02&0.74&\underline{1.13}&0.81&1.43&2.16&3.04&2.34&9.91&2.09&9.11 \\
        &&39&\textbf{0.41}&\textbf{1.24}&\underline{0.50}&\underline{1.27}&1.47&3.93&0.88&1.53&0.54&1.41&1.93&2.61&2.18&8.59&2.18&7.48 \\ 
        \midrule
        \#5&\programming &13&\textbf{0.31}&\underline{0.97}&0.94&2.25&1.38&3.34&0.41&1.47&\underline{0.37}&\textbf{0.85}&1.98&3.63&0.74&1.94&0.40&1.50 \\
         &&26&\textbf{0.35}&\underline{1.14}&1.12&2.68&1.42&3.33&0.50&1.97&\underline{0.41}&\textbf{0.93}&2.03&3.65&0.80&2.13&0.47&1.73 \\
         &&39&\textbf{0.40}&\underline{1.32}&1.36&3.19&1.51&3.44&0.51&2.50&0.54&\textbf{1.08}&2.01&3.55&0.81&2.13&\underline{0.49}&1.79 \\ 
        \midrule
        \#6&\covid
        &7&\textbf{0.47}&\textbf{1.40}&1.12&2.87&1.95&2.75&1.60&5.55&\underline{1.10}&\underline{2.39}&4.60&5.41&2.39&7.91&1.23&7.07 \\
        &&14&\textbf{0.70}&\textbf{2.42}&1.33&3.94&1.97&\underline{2.85}&2.04&7.64&\underline{1.18}&2.97&2.30&3.88&2.10&6.78&1.45&8.50 \\
        &&21&\textbf{0.98}&3.65&1.58&5.83&1.98&\textbf{3.08}&2.24&17.30&\underline{1.25}&\underline{3.25}&2.61&4.40&1.74&5.71&1.72&9.14 \\
        \bottomrule
    \end{tabular}
    }
    \label{tab:accuracy}
\end{table*}

In this section, we describe the performance of \propose using real datasets. 
The experiments were designed to answer the following questions about \propose .

\begin{itemize}
\setlength{\itemindent}{-4mm}
\item [Q1.]
\textit{Accuracy:} How accurately does it forecast future values?
\item [Q2.]
\textit{Scalability:} How does it scale in terms of computational time?
\item [Q3.]
\textit{Effectiveness:} How well does it extract latent dynamic patterns?
\end{itemize}

\mypara{Datasets}
We use web search volume datasets obtained from GoogleTrends \cite{googletrend} and data on the number of COVID-19 infections obtained from the COVID-19 Open Data Repository \cite{covid-19}.
For the web search volume datasets, each tensor contains weekly web search volumes for various keywords in 48 countries from 2010 to 2022.
A wide range topic of keywords was selected to demonstrate the proposed method's ability to capture diverse interest diffusion.
The COVID-19 dataset contains the daily number of COVID-19 infections and deaths in 50 countries from January 2020 to September 2022.
All data are normalized so that the maximum value is 1 and the minimum value is 0.
The set of keywords contained by each dataset and data size are shown in Table~\ref{tab:dataset} in Appendix \ref{appendix:exp_setting}.

\mypara{Baselines}
We compare our method with the following state-of-the-art models for time series forecasting: DISMO \cite{dismo}, FluxCube \cite{fluxcube}, PatchTST \cite{PatchTST}, Autoformer \cite{Autoformer}, DeepAR \cite{DeepAR}, LaST \cite{LaST}, and CoST \cite{CoST}.
The use of these models and the experimental settings are detailed in Appendix \ref{appendix:exp_setting}.

\mypara{Q1: Accuracy}
We compared the MAE and RMSE when we varied the forecasting step $\lf$ in $\{ 13,26,39\}$-steps (i.e., 3, 6, 9 months) for GoogleTrends datasets and in $\{ 7,14,21\}$-steps (i.e., 1,2,3 weeks) for COVID-19 dataset.
For our proposed method, we set $\lc$ , the length of the input tensor, at $104$ steps (i.e., two years) for GoogleTrends datasets and $56$ steps (i.e., eight weeks) for the COVID-19 dataset.
Table \ref{tab:accuracy} shows the overall results, where the bold font and underlining show methods providing the best and second-best levels of performance, respectively.
\propose outperforms its competitors in terms of MAE on all datasets
and greatly reduces RMSE.
DISMO extracts competitive relationships
in tensor streams, resulting in relatively high accuracy
for datasets related to competitive industries
(i.e., \device, \vod, and \chatapp).
However, since it misses location dependencies
and cannot adjust the CP rank dynamically,
our method is superior to DISMO.
The offline models exhibit low forecasting accuracy
due to shifting trends
although DNN-based models can learn
complex representations of seasonalities and trends.
Note that RMSE is sensitive to large errors.
The comparable performances of our method
in terms of RMSE indicate that it produced
temporal large errors because of shifting trends
out of the current tensor
but once it observed them,
it can update the current model for subsequent forecasting
and obtained the best MAE scores.
As a result,
our method realizes accurate tensor forecasting
with being aware of diffusion patterns in social activities.

\begin{figure}[h]
    \centering
    \includegraphics[width=0.95\columnwidth]{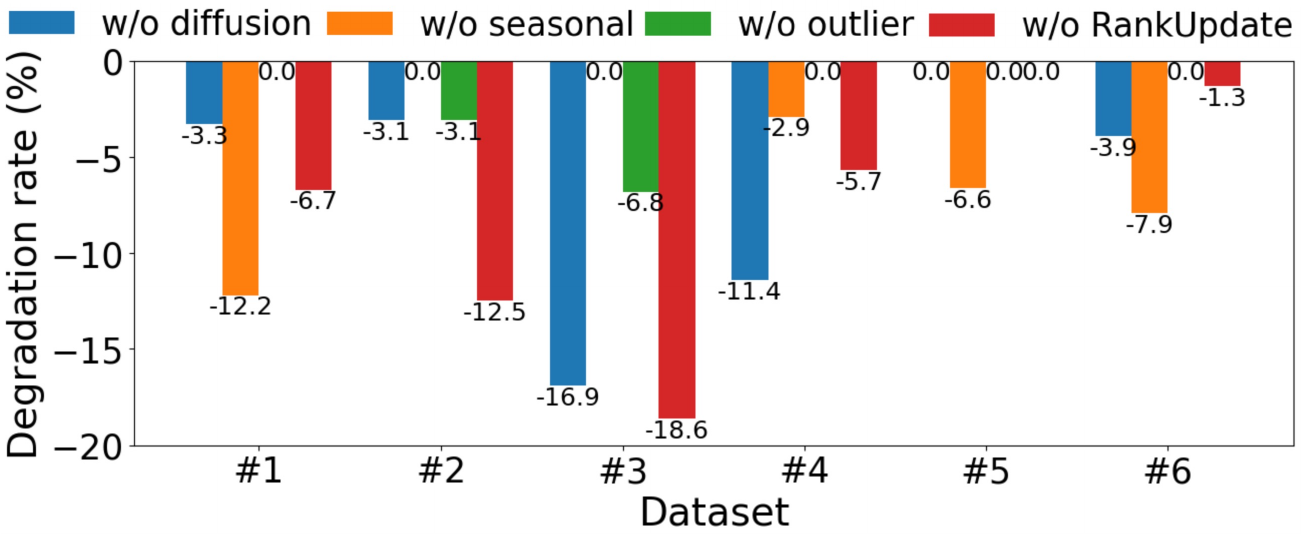}
    \caption{Ablation study results. The numbers under the bars indicate the improvement rate for each. Each mechanism in our model provides an improvement in accuracy.}
    \label{fig:ablation}

    \begin{tabular}{ll}
    \begin{minipage}[b]{0.47\linewidth}
        \centering
        \includegraphics[width=\linewidth]{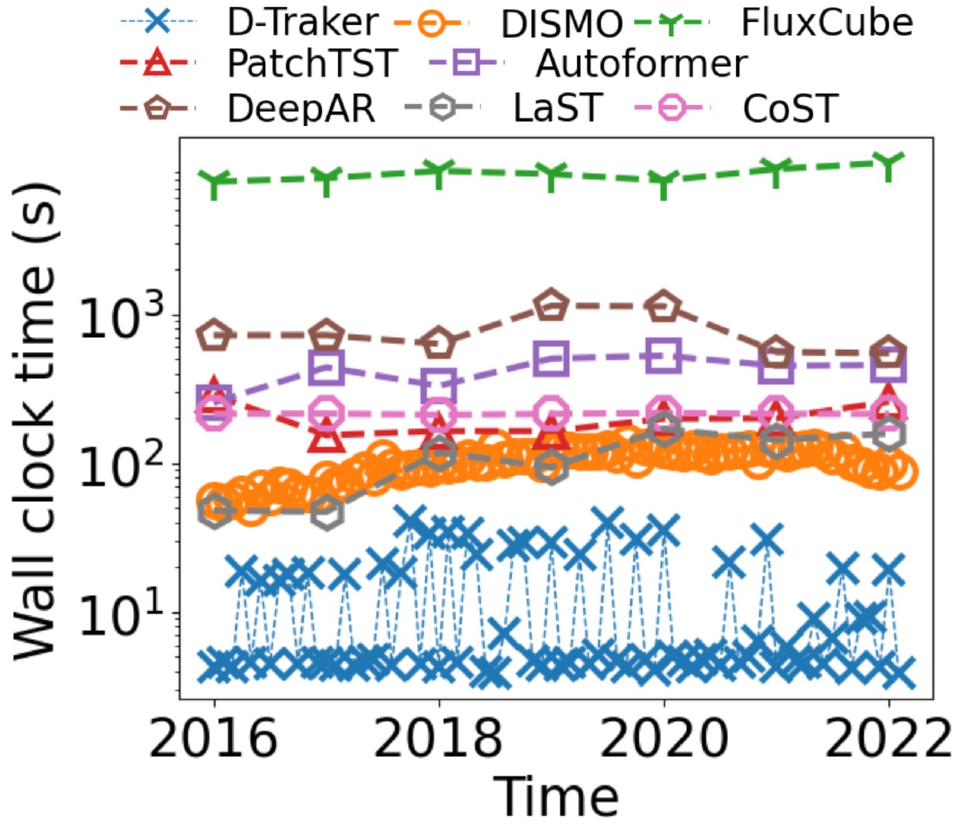}
    \end{minipage}
    &
    \begin{minipage}[b]{0.46\linewidth}
        \centering
        \includegraphics[width=\linewidth]{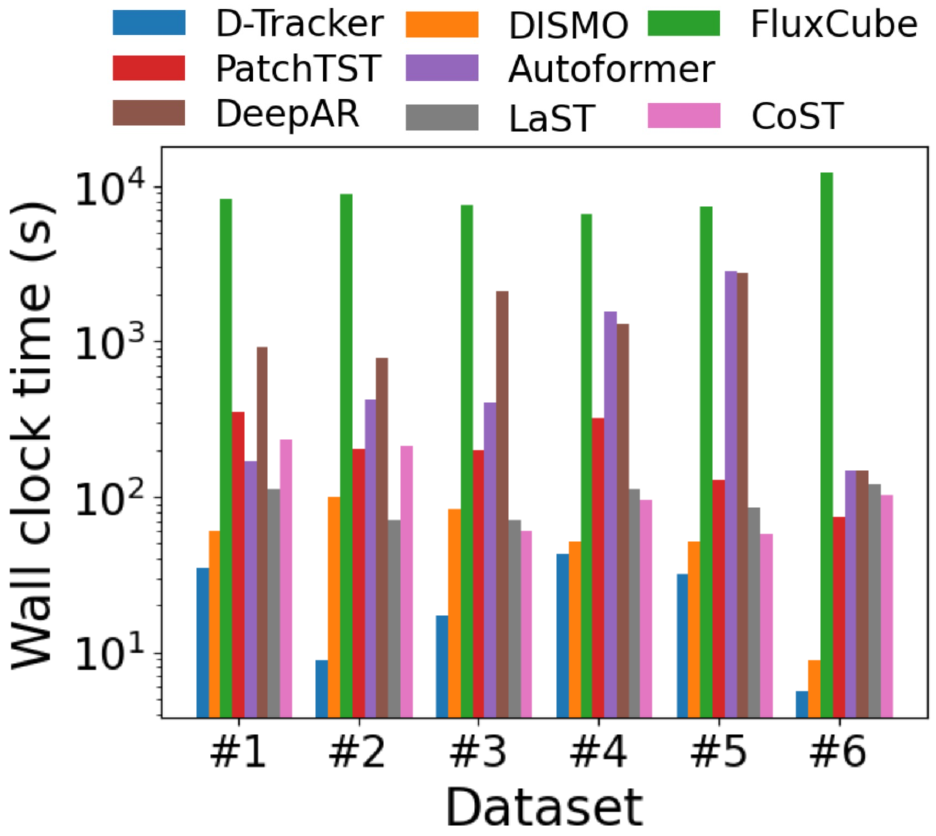}
    \end{minipage}
  \end{tabular}
  \caption{Scalability of \propose :(left) Wall clock time vs. data stream length and (right) average time consumption.}
  \label{fig:scalability}
\end{figure}

\mypara{Ablation study}
We performed ablation experiments to demonstrate the effectiveness of all the components of \propose.
We quantified the contribution of each component to the overall performance by calculating the degradation rate, defined as $\frac{\mathcal{L}_{full} - \mathcal{L}_{abl}}{\mathcal{L}_{full}}$, where $\mathcal{L}_{full}$ and $\mathcal{L}_{abl}$ represent the MAEs for the full model and the ablation models, respectively, over forecast periods of 26 weeks or 14 days.
``Full model w/o diffusion,'' ``w/o seasonal,'' and ``w/o outlier'' refer to the models where the diffusion term, seasonality modeling, and outlier capture mechanisms have been removed, respectively. 
Additionally, to evaluate the impact of the dimensionality of the latent dynamics on accuracy, we also conducted experiments on ``Full model w/o \RankUpdate ''.
\autoref{fig:ablation} shows the overall results.
In datasets with high seasonality (\device , \programming , and \covid), $\theta_s$ works to improve the accuracy.
The diffusion term significantly improved the accuracy in the \vod and \chatapp datasets.
The outlier mechanism filters out abrupt changes and helps capture trends in some datasets, thereby improving the accuracy.
The \RankUpdate algorithm enhances predictive accuracy across the entire data stream by dynamically adjusting the model’s structure to adapt to evolving patterns over time.
These results show that each component in \propose contributes to effectively modeling social activity data streams.
On the other hand, in datasets with no (or very few) seasonal patterns or outliers, the seasonal and outlier terms do not capture important patterns and do not lead to improved forecast accuracy. 
Also, if the optimal number of latent dynamics to represent the data stream did not change throughout, the ranks would not be updated and the ablation model predictions would be exactly the same as the full model predictions. 
Furthermore, in Dataset 5, the diffusion term did not improve accuracy. 
This means that the model was able to represent the data stream well without the diffusion term, suggesting that there was little interest diffusion regarding the programming languages. 
Importantly, while the data stream consists of various patterns, whose presence or absence cannot be predicted in advance, our methods automatically adapt to these patterns.

\mypara{Q2: Scalability}
We next evaluated the performance of \propose in terms of computational time.
The left column in \autoref{fig:scalability} shows the wall clock time of an experiment performed on the \pythonlib dataset.
Thanks to the incremental parameter updates, the computation time of \propose is independent of the data stream length.
The multiple small spikes are due to the execution of \RankUpdate.
To update the number of latent dynamics, \RankUpdate runs \ModelEstimation multiple times, which increases the computation time at that time step.
The computation time of DISMO depends on the number of factors.
Consequently, as the data stream length increases, the number of factors increases, leading to a gradual rise in computation time.
FluxCube demands substantial computational time because of the repeated training needed to estimate optimal location groups.
The right column in \autoref{fig:scalability} presents the average computation time for the entire tensor streams.
\propose achieves the fastest computation time.


\begin{figure}[h]
    \centering
    \begin{tabular}{cc}
        \begin{minipage}[b]{0.45\linewidth}
        \centering
        \includegraphics[width=\linewidth]{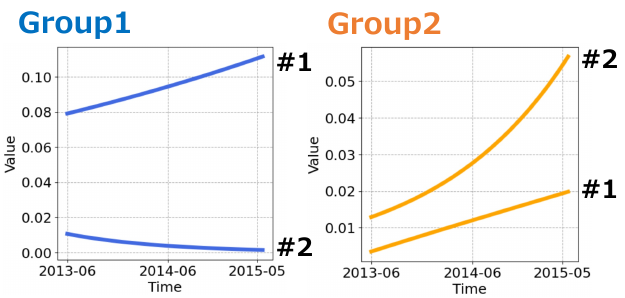}
        \includegraphics[width=\linewidth]{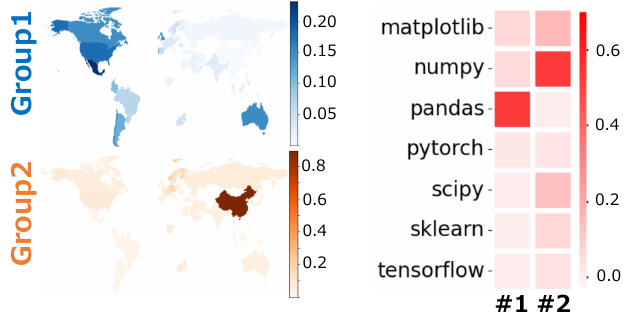}
        \includegraphics[width=\linewidth]{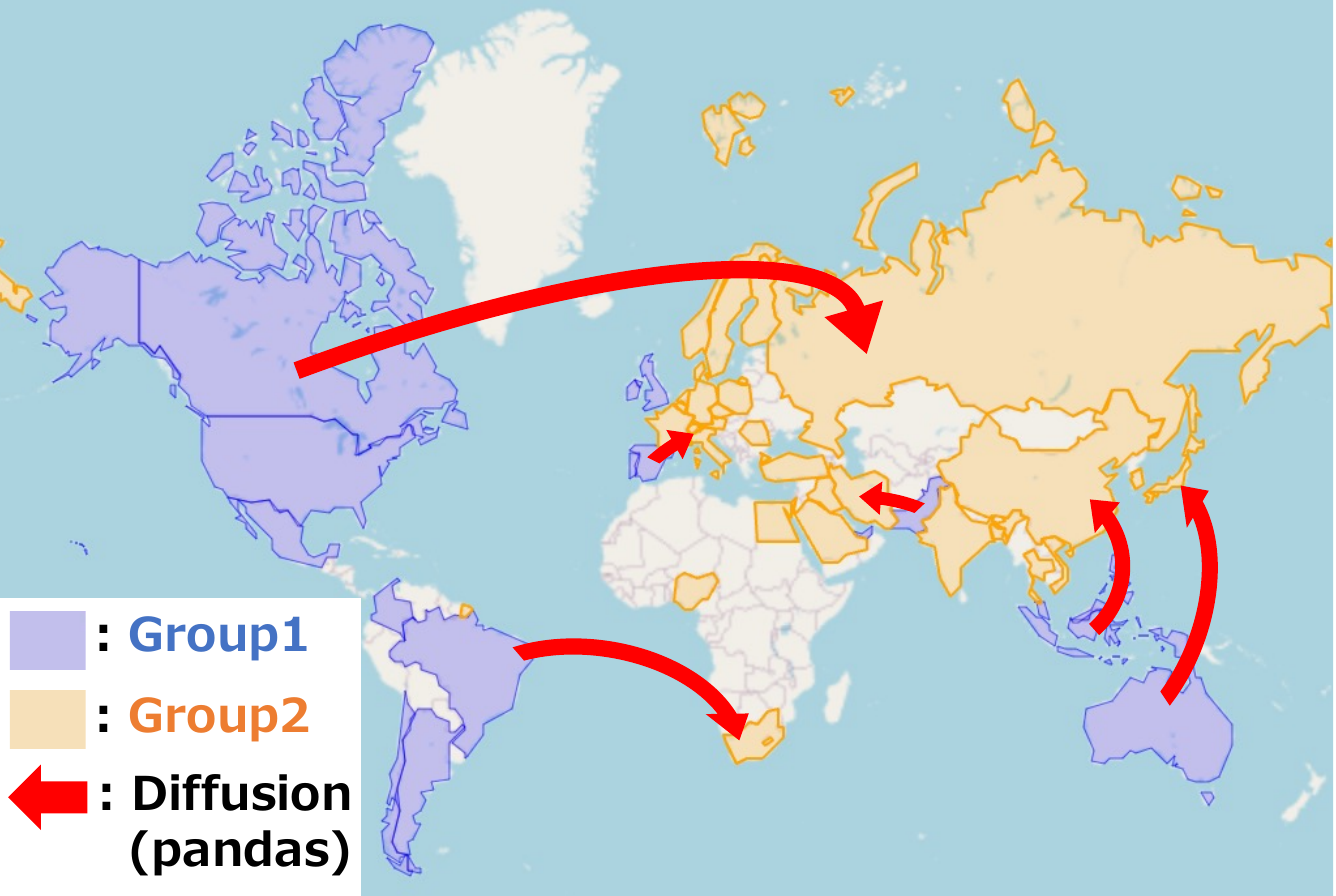}
        (a) 2013-2015
        \end{minipage}
         &  
         \begin{minipage}[b]{0.46\linewidth}
         \centering
         \includegraphics[width=\linewidth]{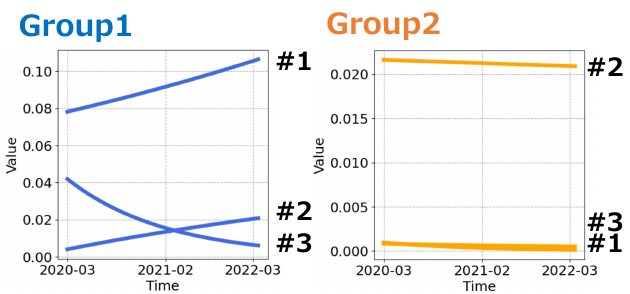}
         \includegraphics[width=\linewidth]{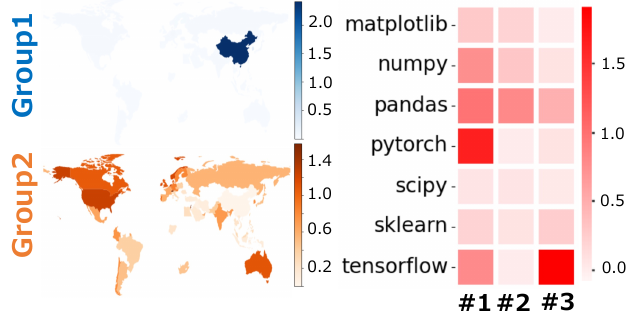}
         \includegraphics[width=\linewidth]{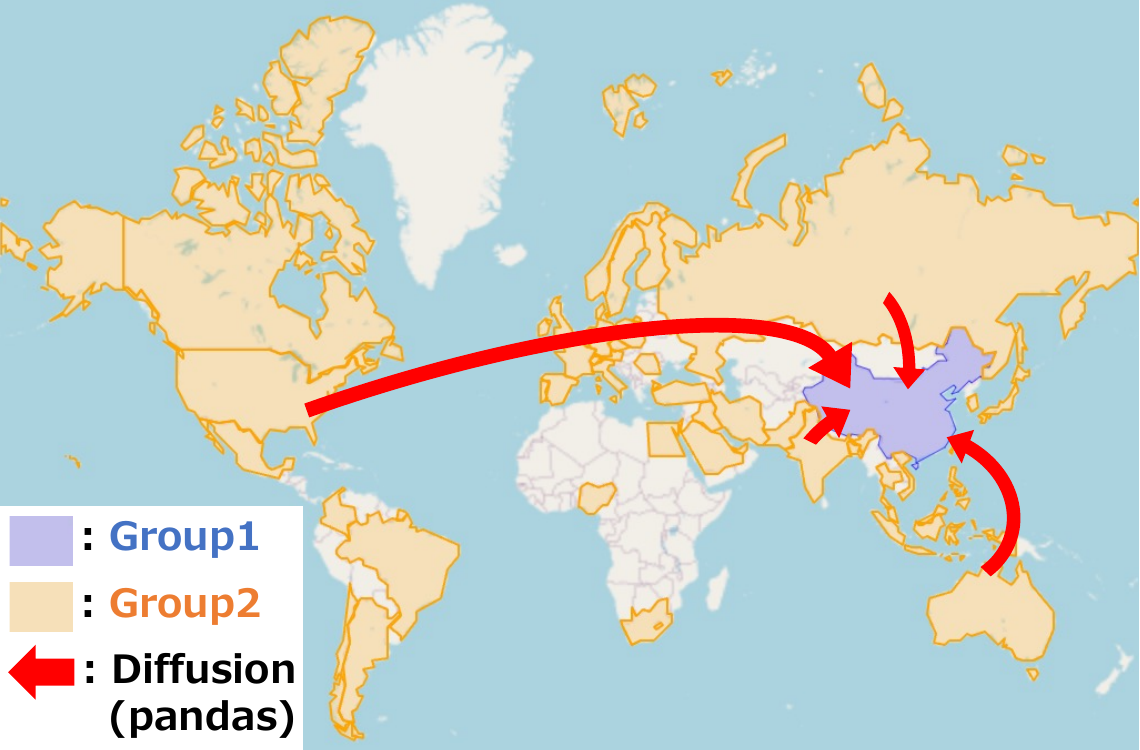}
        (b) 2020-2022
         \end{minipage}
    \end{tabular}
    \caption{Modeling results for the \pythonlib data over two periods ((a) Jun 2013 - May 2015; (b) Mar 2020 - Feb 2022). Our model decomposes the input tensor into the latent dynamics $\Wcore$ (top), the location factor $\Wloc$, and the keyword factor $\Wkey$ (middle). Also, our reaction-diffusion system captures the interest diffusion of ``pandas'' (bottom). 
    }
    \label{fig:exp_python}
\end{figure}

\mypara{Q3: Effectiveness}
Finally, we describe how effectively \propose captures trends and diffusion patterns.
The result for the \device dataset was presented in Section \ref{sec:intro}.

\autoref{fig:exp_python} shows the modeling results for the \pythonlib dataset over two periods.
The top row corresponds to the latent dynamics $\Wcore$ and the middle row corresponds to the location/keyword factors $\Wloc$ and $\Wkey$.
Again, note that the location/keyword factors are non-negative matrices, indicating the degree to which each dynamics is related to each country/keyword.
The bottom row shows the diffusion patterns captured by the reaction-diffusion system.
As for the location groups, a group with a higher $\Wloc$ value is assigned to each country.
Here, the two models in the full parameter set $\F$ are picked up and shown, corresponding to the periods (a) 2013-2015 and (b) 2020-2022 respectively.
As seen in \autoref{fig:exp_python} (a) and (b), the proposed method captures time-varying patterns thanks to the model switching algorithm, and the rank update algorithm (Algorithm \ref{alg:RankUpdate}) changes the dimension of the latent dynamics in each period (i.e. the value of $d_k$ changes from 2 to 3).
In 2013-2015 (i.e., \autoref{fig:exp_python} (a)), the diffusion from Group 1 to 2 was captured for Dynamics \#1.
$\Wkey$ (\#1) shows that Dynamics \#1 is strongly related to ``pandas'', which can be interpreted as a diffusion of interest in ``pandas'' from countries in Group 1 (e.g. North America and Australia) to countries in Group 2 (e.g. Asian countries, mainly China).
Also, in 2020-2022 (i.e., \autoref{fig:exp_python} (b)), the location factor (Group 1) and the keyword factor (\#1) indicate that Dynamics \#1 of Group 1, which shows an increasing trend, is strongly related to the ``numpy,'' ``pandas,'' ``pytorch,'' and China.
Furthermore, Dynamics \#3 of Group 1, which has a decreasing trend, is strongly related to ``tensorflow'' and China.
These results can be interpreted as a growing interest in machine learning libraries in China and a transition of the main tool from ``tensorflow'' to ``pytorch''.
Our reaction-diffusion system also captured a diffusion pattern from Group 2 to Group 1 in Dynamics \#2.
We can read that interest in ``matplotlib,'' ``numpy,'' and ``pandas'', which are related to Dynamics \#2, is diffusing to China.

\section{Conclusion}
    \label{sec:conclusion}
    
In this paper, we proposed an effective modeling and forecasting method, namely \propose , for social activity data streams.
Using the reaction-diffusion system, our model captures latent dynamics such as trends and diffusion patterns in the input tensor.
Also, our algorithm adapts our model to changing patterns in the data stream and allows continuous modeling.
Our approach has the following properties.
(a) \textit{Interpretable}: Our model decomposes the original data into latent dynamics consisting of trends and interest diffusion patterns and location/keyword factors, which improves the interpretability for high-dimensional original data.
(b) \textit{Automatic}: \propose does not require any parameter tuning.
(c) \textit{Scalable}: the computation time of \propose is independent of the time series length.
An experimental evaluation using real datasets obtained from GoogleTrends and COVID-19 Open Data Repository showed that our proposed method achieved higher forecasting accuracy in less computation time than its competitors.

\small
\section*{acknowledgement}
This work was partly supported by
JST BOOST, Japan Grant Number JPMJBS2402, 
“Program for Leading Graduate Schools” of the Osaka University, Japan, 
JST CREST JPMJCR23M3,
JSPS KAKENHI Grant-in-Aid for Scientific Research Number
JP22K17896,    
JP23K16889,    
Research Institute of Science and Technology for Society, Japan, Grant Number JPMJRS23L4. 

\normalfont

\bibliographystyle{ACM-Reference-Format}
\balance
\bibliography{%
BIB/forecast,
BIB/stream,
BIB/web,
BIB/physics,
BIB/mdl,
BIB/tensor
}

\appendix
\section*{Appendix}
\section{Algorithm}
\label{appendix:algo}

In this section, we describe our proposed algorithm in detail.

\subsection{Initialization}
\label{appendix:init}
Here, we detail the initialization step of Algorithm \ref{alg:ModelEstimation}.
To decompose the current window $\Xc$ into the trend tensor $\Xd$ and the seasonal tensor $\Xs$, we propose using STL \cite{STL}, which is a robust and general algorithm for seasonal-trend decomposition.
It decomposes time series data into the sum of the trend, the seasonal pattern, and the residual term.
Furthermore, $\Stime$, $\Skey$, and $\Sloc$ are obtained by performing PARAFAC decomposition on $\Xs$.
For the initialization of $\Wkey$ and $\Wloc$, we use Nonnegative Tucker Decomposition (NTD) \cite{NTD}.
We apply NTD to $\Xd$ and use the obtained keyword and location factors as the initial values of $\Wkey$ and $\Wloc$, respectively.

\subsection{ModelEstimation}
\label{appendix:ModelEstimation}
Algorithm \ref{alg:ModelEstimation} shows the \ModelEstimation in detail.
It is based on alternating least squares (ALS).
Specifically, we alternate between updating the reaction-diffusion system and updating each factor for the trend tensor, each factor for the seasonal tensor, and the outlier tensor, while keeping the other parameters fixed.
\begin{algorithm}[h]
    \footnotesize
    \caption{\ModelEstimation ($\Xc ,d_{k}, d_{l}, d_{s}$)}
    \label{alg:ModelEstimation}
\begin{algorithmic}[1]
    \REQUIRE
        Current tensor $\Xc$ and
        current ranks $d_{k}, d_{l}, d_{s}$
    \ENSURE
        Model parameter $\Theta$ \\
\STATE $\{ \theta_d ,\theta_s \} = \textrm{Initialization}(\Xc ,d_{k} ,d_{l} ,d_{s})$; $\Xo = 0$;
\REPEAT
    \STATE /* Update reaction-diffusion system with Equation(\ref{eq:lmfit}) */
    \STATE $\theta = \argmin_{\theta^{*}} ||\Xc - \Xs - \Xo - f_{\theta_d}(\{ \theta^{*},\Wkey ,\Wloc \} ,w_0,L)||$; 
    \STATE /* Update factors with Equation(\ref{eq:factorupdate}) */
    \STATE $\Wcore = f_{\theta}(\theta ,w_0,L)$;
    \STATE $\{\Wkey ,\Wloc\} = \rm{Update}(\Wcore ,\Wloc ,\Wkey ,\Xc - \Xs - \Xo)$; 
    \STATE /* Update seasonal factors */
    \STATE $\{\Stime ,\Skey ,\Sloc\} = \rm{Update}(\Stime ,\Skey ,\Sloc ,\Xc - \Xd - \Xo)$;
    \STATE /* Update outlier tensor */
    \STATE $\Xo = \rm{Sparsify}(\Xc - \Xd - \Xs)$;
\UNTIL{convergence;}
\STATE {\bf return} $\Theta = \{ \theta_d ,\theta_s ,\Xo \}$;
\end{algorithmic}
\end{algorithm}

\hide{
In addition, we present a lemma concerning the interpretability of our model.
\begin{lemma}
    \label{lemma2}
    The structure of our model (i.e., $\theta_d$) can describe trends and diffusion patterns within the input tensor.
\end{lemma}
\begin{proof}
    To simplify the notation, $\Wcore_{ij}$ is written as $w_{ij}$ and the reconstructed tensor $\Xd$ as $\X$.
    Consider the diffusion from location $v'$ to $v$ for keyword $u$ (i.e., $\X_{uv}$ and $\X_{uv'}$).
    Based on Equation~(\ref{eq:Xd}),
    the dynamics of the reconstructed tensor is represented as follows: 
    \small
    \begin{align}
        \frac{d\X_{uv}}{dt} 
        =& \sum_{i=1}^{d_k}\sum_{j=1}^{d_l}\Wkey_{iu}\Wloc_{jv}\biggl(a_{ij}w_{ij}+\sum_{j'=1}^{d_l}d_{ijj'}(w_{ij'}-w_{ij})\biggr) \nonumber \\
        \label{eq:dynamics}
        =& \sum_{i=1}^{d_k}\sum_{j=1}^{d_l}a_{ij}\X_{uv} \nonumber \\
        &+ \sum_{i=1}^{d_k}\sum_{j=1}^{d_l}\sum_{j'=1}^{d_l}d_{ijj'}\biggl(\frac{\Wloc_{jv}}{\Wloc_{j'v'}}(\X_{uv'} - R_{ij'}^{uv})-(\X_{uv} - R_{ij}^{uv})\biggr),
    \end{align}
    \normalsize
    where $R_{ij}^{uv}=\sum_{(p,q)\setminus (i,j)}^{(d_k,d_l)}w_{pq}\Wkey_{pu}\Wloc_{qv}$.
    $R_{ij}^{uv}$ is the sum of the reconstruction of latent dynamics other than $w_{ij}$ when reconstructing $\X_{uv}$.
    The first and second terms describe the trend and the diffusion pattern between dynamics, respectively.
\end{proof}
The dynamics of the reconstructed tensor includes complex summation calculations, but $\Wkey$, $\Wloc$ and $\mathcal{D}$ contain many zeros or small values and only a few terms with large values of these matrices are dominant.
For example, consider the simplest case, where $\Wkey_{iu}, \Wloc_{jv}, \Wloc_{j'v'}, d_{ijj'}>0$ and all other values are zero.
In this case, $w_{ij'}$ is diffused to $w_{ij}$, and $w_{ij}$ and $w_{ij'}$ constitute $\X_{uv}$ and $\X_{uv'}$ respectively.
Based on equations (\ref{eq:Xd_ele}) and (\ref{eq:dynamics}), the dynamics of $\X_{uv}$ is represented as follows.
\small
\begin{align}
    \label{eq:reconst}
    \frac{d\X_{uv}}{dt} 
    &= a_{ij}\X_{uv} + d_{ijj'}\biggl(\frac{\Wloc_{jv}}{\Wloc_{j'v'}}\X_{uv'}-\X_{uv}\biggr)
\end{align}
\normalsize
Comparing equations (\ref{eq:reactiondiffusion}) and (\ref{eq:reconst}), since $\Wkey$ and $\Wloc$ are non-negative, the trend and diffusion components are not inverted in the reconstruction process, thereby preserving interpretability.
}

\subsection{Theoretical analysis}
\label{appendix:proof}
We provide the proofs of Lemma \autoref{lemma1}.

\begin{proof}
The number of parameters to be estimated for our reaction-diffusion system is $d_{k}d_{l}^{2}$.
Hence, the time complexity of parameter optimization according to \autoref{eq:lmfit} is $O(d_{k}d_{l}^{2})$.
Also, the time complexity of updating $\Wkey \in \mathbb{R}^{d_k \times k}$ and $\Wloc \in \mathbb{R}^{d_l \times l}$ according to \autoref{eq:factorupdate} is $O(kd_{k} + ld_{l})$.
Hence, the computational complexity of Algorithm \ref{alg:ModelEstimation} is $O(\# iter \cdot (d_{k}d_{l}^{2} + kd_{k} + ld_{l}))$.
Since $\# iter$ is a negligibly small constant value, the total time complexity is $O(d_{k}d_{l}^{2} + kd_{k} + ld_{l})$.
\end{proof}

\subsection{ModelUpdate}
\label{appendix:modelupdate}
Here, we describe \ModelUpdate in detail.
Algorithm \ref{alg:ModelUpdate} shows the overall procedure of \ModelUpdate .
As described in Section \ref{sec:algo}, given the current tensor $\Xc$, the full parameter set $\F$, and the candidate model parameter $\Theta$, the algorithm compares the total cost of adding $\Theta$ to $\F$ with the total cost without the addition. 
If the total cost is lower with the new model, we add the candidate model parameter to the full parameter set and model the current tensor using the new model.

After switching the model, \ModelUpdate updates the ranks (i.e., the number of latent dynamics $d_{k}, d_{l}$ and $d_{s}$) by running \RankUpdate algorithm.
Algorithm \ref{alg:RankUpdate} shows the \RankUpdate in detail.
We update these values based on the total cost to better represent the current tensor.
We assume that the optimal values change gradually in streaming scenarios and only search for cases where only one of these values changes by one.

\begin{algorithm}[h]
    \footnotesize
    \caption{\ModelUpdate ($\Xc ,\F ,\Theta' , d_{k}, d_{l}, d_{s}$)}
    \label{alg:ModelUpdate}
\begin{algorithmic}[1]
    \REQUIRE
        (a) Current tensor $\Xc$, (b) Full parameter set $\F$; \\
        \hspace{1.5em}
        (c) Candidate parameter $\Theta$, (d) Current ranks $d_{k}, d_{l}, d_{s}$ \\
    \ENSURE
        (a) Full parameter set $\F$ and 
        (b) updated ranks $d_{k}, d_{l}, d_{s}$ \\
\STATE $\F' \leftarrow \F$
\IF{$\costT{\Xc}{\F' \cup \Theta'}$ is less than $\costT{\Xc}{\F'}$}
\STATE $\F' \leftarrow \F' \cup \Theta'$;
\STATE $C = \costT{\Xc}{\F' \cup \Theta'}$;
\STATE $\{d_{k}, d_{l}, d_{s}\} =$ \RankUpdate $(\Xc ,d_{k}, d_{l}, d_{s}, C)$;
\ENDIF
\STATE {\bf return} $\F' ,d_{k}, d_{l}, d_{s}$;
\end{algorithmic}
\end{algorithm}
\begin{algorithm}[h]
    \footnotesize
    \caption{RankUpdate ($\Xc , d_{k}, d_{l}, d_{s}$)}
    \label{alg:RankUpdate}
\begin{algorithmic}[1]
    \REQUIRE
        Current tensor $\Xc$ 
        , current ranks $d_{k}, d_{l}, d_{s}$, and current cost $C$ \\
    \ENSURE
        Updated ranks $d_{k}, d_{l}, d_{s}$ \\

\STATE /* Update $d_{k}, d_{l}, d_{s}$ */
\STATE $C_{best} = C$ 
\FOR{$(d_{k}', d_{l}', d_{s}')$ \textbf{in} $(d_{k}-1, d_{l}, d_{s}),(d_{k}, d_{l}+1, d_{s}),(d_{k}, d_{l}-1, d_{s}),(d_{k}, d_{l}+1, d_{s}),(d_{k}, d_{l}, d_{s}-1),(d_{k}, d_{l}, d_{s}+1)$}
    \STATE $\Theta' =$ \ModelEstimation $(\Xc, d_{k}', d_{l}', d_{s}')$
    \IF{$C_{best} > \costT{\Xc}{\Theta'}$}
        \STATE $C_{best} = \costT{\Xc}{\Theta'}$
        \STATE $\{ d_{k}, d_{l}, d_{s}\} = \{ d_{k}', d_{l}', d_{s}'\} $
    \ENDIF
\ENDFOR
\STATE {\bf return} $d_{k}, d_{l}, d_{s}$;
\end{algorithmic}
\end{algorithm}

\section{Experimental evaluation}
\label{appendix:exp}

\subsection{Experimental Settings}
\label{appendix:exp_setting}
Table \ref{tab:dataset} shows the details of the datasets we use.
We compare our algorithm with the following state-of-the-art models for time series forecasting.
\begin{itemize}
    \item DISMO \cite{dismo}: a stream algorithm that is designed to discover dynamic interactions and seasonality in a multi-order tensor stream. It has no parameters to set.
    \item FluxCube \cite{fluxcube}: a method for modeling diffusion patterns in temporal tensors based on physics-informed neural networks.
    We set $\{ 16, 32, 64 \}$ as the size of the RNN hidden layers, and $\{ 0.1, 0.2, 0.3 \}$ as the weight of the loss function.
    \item PatchTST \cite{PatchTST}: a stete-of-the-art Transformer based model, which incorporates patching and channel independence.
    We set $\{ 2,4,8,16,24 \}$ as the patch length, $\{ 2,4,8,16,24 \}$ as the stride, $\{ 4,8,16 \}$ as the number of heads, $\{ 0.001, 0.0001 \}$ as the learning rate, and $\{ 0, 0.1, 0.2, 0.3 \}$ as the dropout rate.
    \item Autoformer \cite{Autoformer}: a Transformer based model, which includes an Auto-Correlation mechanism that discovers period-based dependencies. 
    We set $\{ 2,4,8,16 \}$ as the number of heads, $\{ 1,2,3 \}$ as the number of layers of the encoder, $\{ 1,2,3 \}$ as the number of layers of the decoder, $\{ 1,2,3 \}$ as the number of factors, and $\{ 0.001, 0.0001 \}$ as the learning rate.
    \item DeepAR \cite{DeepAR}: a probabilistic forecasting model based on an autoregressive RNN. 
    We set $\{ 2,3,4 \}$ as the number of layers, $\{ 32,64,128,256 \}$ as the size of the hidden layers of RNN, $\{ 0.001, 0.005, 0.0001 \}$ as the learning rate, and $\{ 0,0.1,0.2,0.3 \}$ as the dropout rate.
    \item LaST \cite{LaST}: a representation learning based method that infers a couple of representations depicting trends and seasonality of time series.
    We set $\{ 16,32,64,128 \}$ as the dimension of latent representations, $\{ 0.1,0.2,0.3 \}$ as the dropout rate, and 0.001 as the initial learning rate.
    \item CoST \cite{CoST}: a contrastive learning based method that infers a couple of representations depicting trends and seasonality of time series.
    We set $\{ 160, 320, 640 \}$ as the representation dimensions, and $\{ 0.0005, 0.005, 0.05 \}$ as the weight for loss function.
\end{itemize}

We trained PatchTST, Autoformer, and DeepAR for 100 epochs with early stopping, FluxCube for 2000 epochs with early stopping, LaST for 1000 epochs with early stopping, and Cost for 600 epochs.
We also trained the model every year for GoogleTrends datasets and every month for the Covid-19 dataset.
In order to prevent underestimation of the baselines, we ran PatchTST, Autoformer, and DeepAR for three different lengths of the input tensor, denoted as $\lc \in \{ 52, 104, 156 \}$, and select the best results to establish strong and robust baselines for GoogleTrends datasets.
For COVID-19 dataset, we set $56$ as the length of the input tensor for all the baselines except FluxCube.
We conducted all our experiments on an Intel Xeon Platinum 8268 2.9GHz CPU and RTX A6000 GPU with 512GB of memory.

\mypara{Hyperparameter tuning}
For GoogleTrends datasets, we used the first 5-year data (i.e., 2010-2014) for training, and data in 2015 as validation data for hyperparameter tuning.
Also, for COVID-19 dataset, we used data in 2020 for training, and data in January 2021 as validation data.
Hyperparameters were tuned using Optuna or grid search.
In the initialization step of \propose (i.e., Algorithm \ref{alg:DTracker}), we searched for the optimal numbers of latent dynamics by employing grid search in $(d_{k},d_{l}) \in [2,4]$ and $d_{s} \in [0,4]$, so that Equation (\ref{eq:totalcost}) was minimized.

\begin{table}[t]
    \centering
    \caption{Dataset description}
    \scalebox{0.85}{
    \begin{tabular}{c|l|c|c}
        \toprule
        ID & Name & Keywords & Data size \\ 
        \midrule
        \# 1 & \device    & Apple Watch/Fitbit/Galaxy/Google Pixel/ & $(676, 8, 50)$ \\
             &            & Macbook/Xperia/iPad/iPhone & \\ 
        \midrule
        \# 2 & \pythonlib & matplotlib/numpy/pandas/pytorch/ & $(676,7,50)$ \\
             &            & scipy/sklearn/tensorflow \\
        \midrule
        \# 3 & \vod       & Apple TV/DAZN/HBO/Prime Video/hulu/ & $(676,7,50)$\\
             &            & netflix/youtube & \\
        \midrule
        \# 4 & \chatapp   & facebook/instagram/slack/snapchat/ & $(676,8,50)$\\
             &            & tiktok/twitter/viber/whatsapp \\
        \midrule
        \# 5 & \programming & CSS/HTML/Java/JavaScript/PHP/ & $(676,9,50)$\\
             &              & Ruby/SQL/python/swift \\
        \midrule
        \multicolumn{3}{l}{Infection dataset} \\
        \midrule
        \# 6 & \covid       & new infected /new deceased & $(990,2,50)$\\
        \bottomrule
    \end{tabular}
    }
    \label{tab:dataset}
\end{table}

\begin{figure}[h]
  \begin{tabular}{ll}
    \begin{minipage}[b]{0.47\linewidth}
        \centering
        \includegraphics[width=\linewidth]{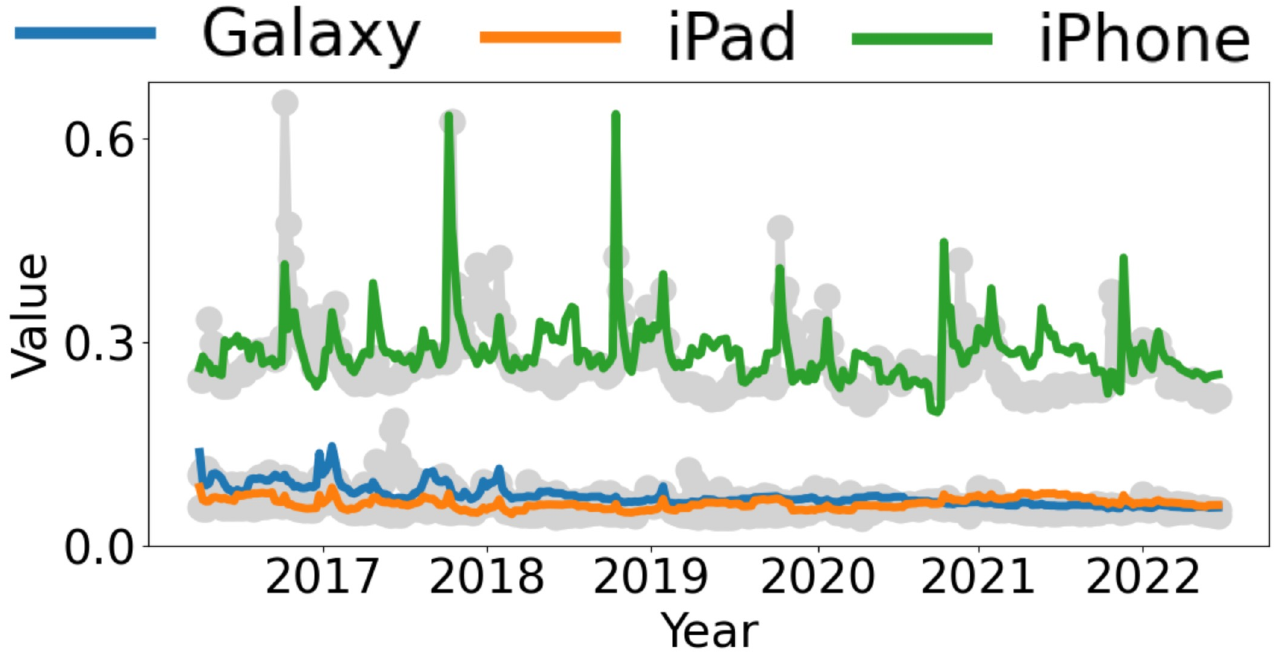} \\
        (a) The three keywords in US
    \end{minipage}
    &
    \begin{minipage}[b]{0.47\linewidth}
        \centering
        \includegraphics[width=\linewidth]{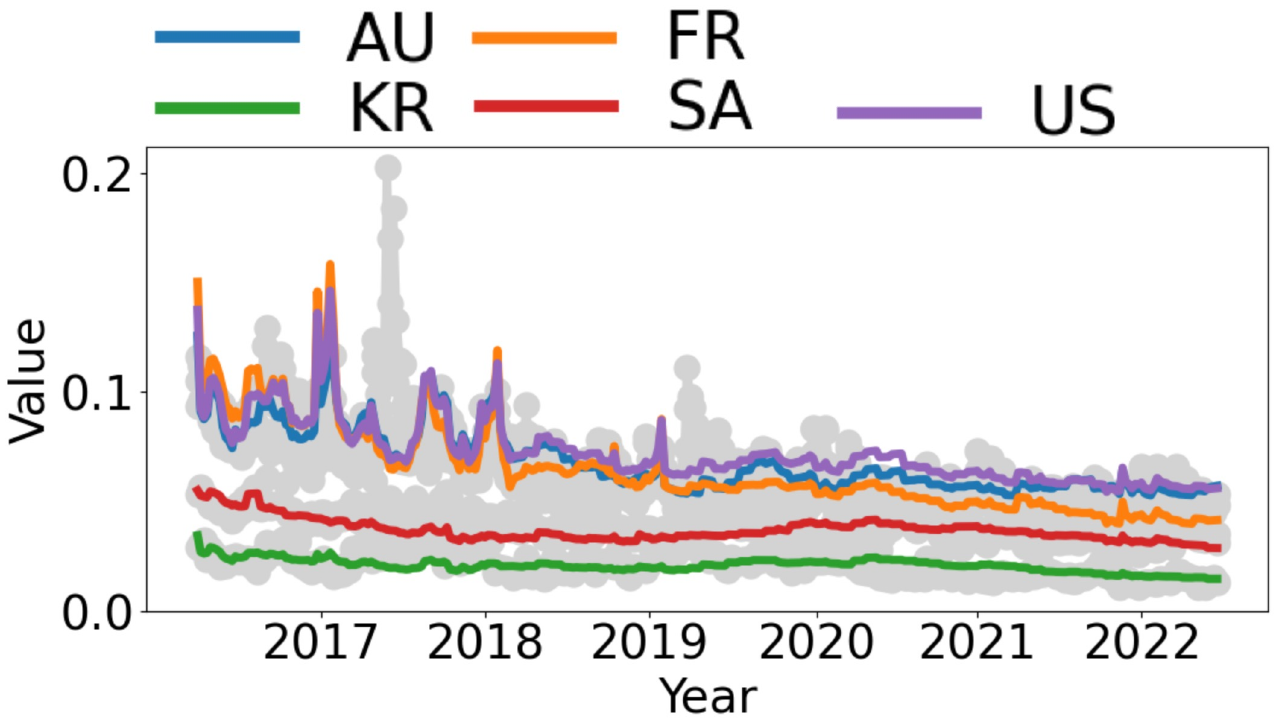} \\
        (b) ``Galaxy" in five countries
    \end{minipage}
  \end{tabular}
  \caption{Forecasting power of \propose : Our model accurately captures trends and seasonal patterns that vary by location and keyword, and performs stream forecasting. }
  \label{fig:forecasting}
\end{figure}

\subsection{Effectiveness}
\label{appendix:effectiveness}
\autoref{fig:exp_chatapp} shows the modeling results for the \chatapp dataset between 2016 and 2018.
Dynamics \# 1 in Group 2 shows an increasing trend.
The keyword factor (\# 1) shows that Dynamics \# 1 is strongly related to ``facebook'' and the location factor (Group 2) shows that Group 2 is related to North America, Australia, and European countries.
These results indicate that interest in ``facebook'' has a pattern of growing interest in these countries.
Also, Dynamics \# 2 in Group 2 has a slight upward trend and is strongly related to ``instagram'', North America and European countries (See the \# 2 of the keyword factor and the Group 2 of the location factor).
This indicates a growing interest in ``instagram'' in these countries.
In addition, in 2016-2018, our reaction-diffusion system showed that for Dynamics \# 1 there is a pattern of diffusion from Group 1 to Group 3.
This indicates that interest in ``facebook'', which is related to Dynamics $\# 1$, diffused from countries strongly related to Group 1 to those strongly related to Group 2. 
\autoref{fig:exp_chatapp} (c-ii) show the diffusion pattern on a map.
Furthermore, our algorithm can adapt to dynamic changes in patterns by incrementally updating the model.
\autoref{fig:exp_chatapp} (c-i) illustrates the diffusion pattern captured in 2010-2012.
It can be interpreted as a change in the flow of diffusion of interest in ``facebook'' over time.

In addition, \autoref{fig:forecasting} shows forecasting results of \propose for the \device dataset.
Our method models only the most recent tensor and continuously generates future values.
\autoref{fig:forecasting} (a) shows forecasting results for three devices in the United States.
Our method accurately captures the seasonality for the ``iPhone" (due to the launch of new models) and predicts future seasonal events.
\autoref{fig:forecasting} (b) shows forecasting results for ``Galaxy" in the five countries.
Our method accurately captures the declining trend, which varies in strength from country to country, and performs accurate forecasting.

\begin{figure}[h]
    \centering
    \includegraphics[width=\linewidth]{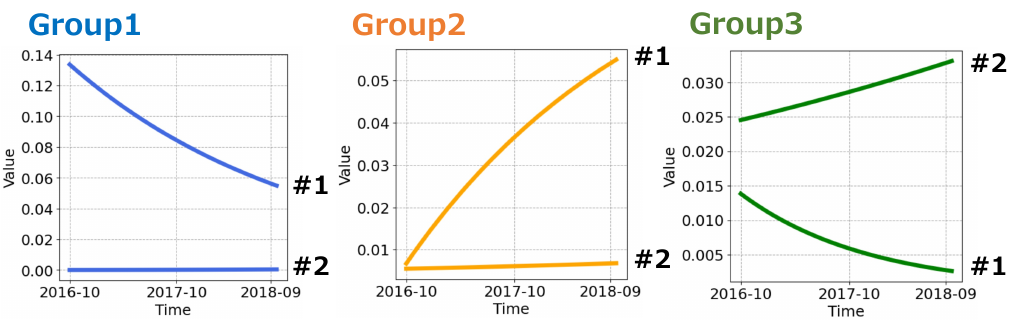}
    (a) Latent dynamics generated from reaction-diffusion system
    \begin{tabular}{cc}
        \begin{minipage}[b]{0.65\linewidth}
        \centering
        \includegraphics[width=\linewidth]{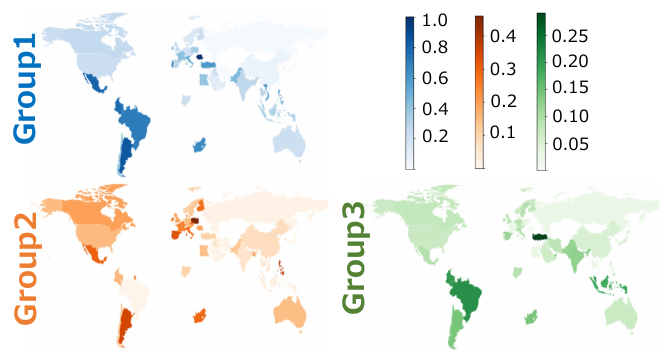}
        \end{minipage}
         &  
         \begin{minipage}[b]{0.25\linewidth}
         \centering
         \includegraphics[width=\linewidth]{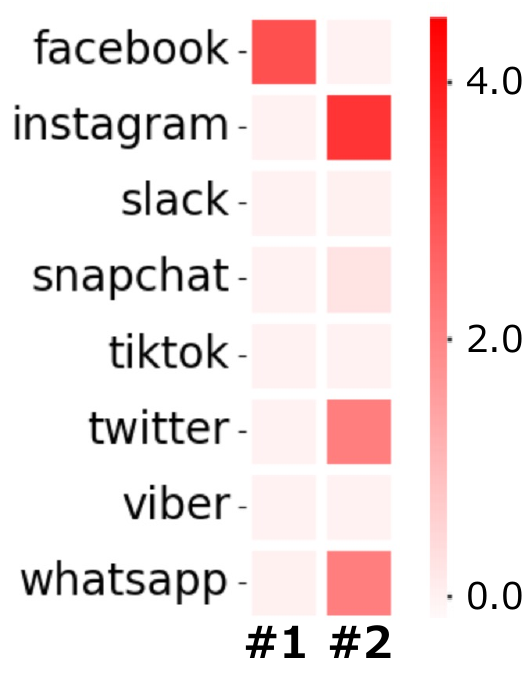}
         \end{minipage}
    \end{tabular}
    (b) Location and keyword factors
    \begin{tabular}{cc}
        \begin{minipage}[b]{0.45\linewidth}
        \centering
        \includegraphics[width=\linewidth]{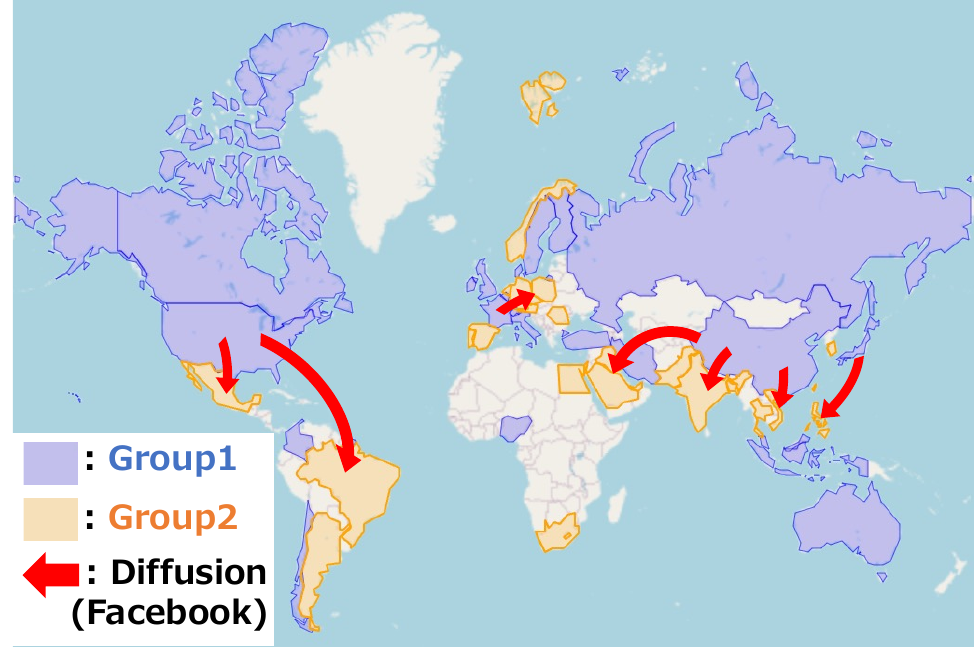}
        (c-i) 2010-2012
        \end{minipage}
         &  
         \begin{minipage}[b]{0.45\linewidth}
         \centering
         \includegraphics[width=\linewidth]{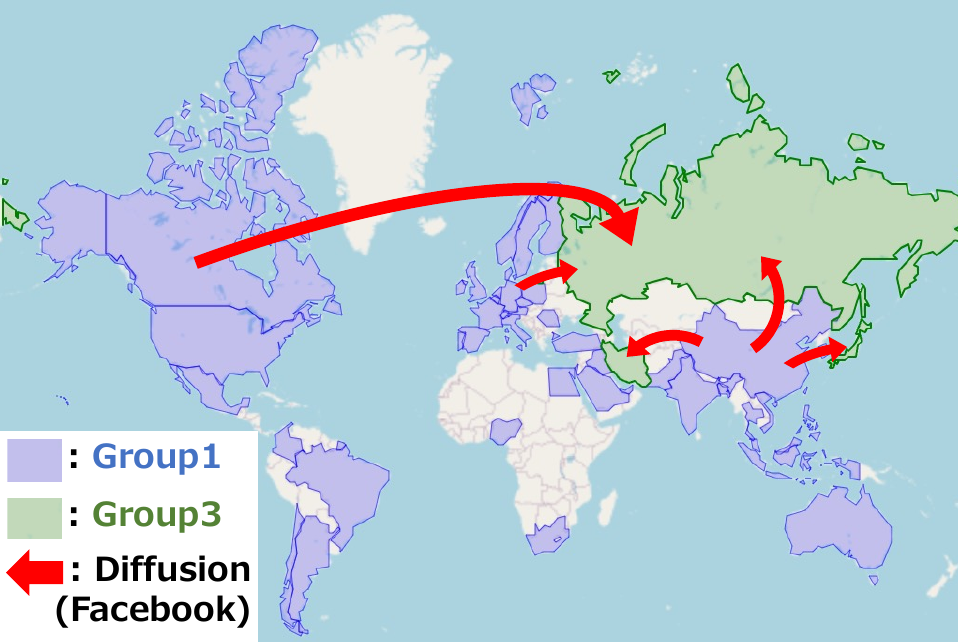}
        (c-ii) 2016-2018
         \end{minipage}
    \end{tabular}
    (c) DIffusion patterns over the two periods
    \caption{Modeling results for the \chatapp data. 
    }
    \label{fig:exp_chatapp}
\end{figure}

\clearpage


\end{document}